\newsavebox\myboxA
\newsavebox\myboxB
\newlength\mylenA
\newcommand*\xoverline[2][0.75]{%
    \sbox{\myboxA}{$\m@th#2$}%
    \setbox\myboxB\null% Phantom box
    \ht\myboxB=\ht\myboxA%
    \dp\myboxB=\dp\myboxA%
    \wd\myboxB=#1\wd\myboxA% Scale phantom
    \sbox\myboxB{$\m@th\overline{\copy\myboxB}$}%  Overlined phantom
    \setlength\mylenA{\the\wd\myboxA}%   calc width diff
    \addtolength\mylenA{-\the\wd\myboxB}%
    \ifdim\wd\myboxB<\wd\myboxA%
       \rlap{\hskip 0.5\mylenA\usebox\myboxB}{\usebox\myboxA}%
    \else
        \hskip -0.5\mylenA\rlap{\usebox\myboxA}{\hskip 0.5\mylenA\usebox\myboxB}%
    \fi}
\DeclareMathAlphabet{\mathdutchcal}{U}{dutchcal}{m}{n}
\SetMathAlphabet{\mathdutchcal}{bold}{U}{dutchcal}{b}{n}
\DeclareMathAlphabet{\mathdutchbcal}{U}{dutchcal}{b}{n}
\DeclareMathAlphabet{\mathpzc}{OT1}{pzc}{m}{it}
\newcolumntype{R}[1]{>{\raggedleft\arraybackslash }b{#1}}
\newcolumntype{L}[1]{>{\raggedright\arraybackslash }b{#1}}
\newcolumntype{C}[1]{>{\centering\arraybackslash }b{#1}}
\newtheorem{theorem}{Theorem}[section]
\newtheorem{corollary}{Corollary}[theorem]
\newtheorem{prop}{Proposition}[section]
\theoremstyle{definition}
\newtheorem{definition}{Definition}[section]
\theoremstyle{remark}
\newtheorem*{remark}{Remark}
\newcommand\numberthis{\addtocounter{equation}{1}\tag{\theequation}}
\newcommand{\X}{\mathfrak{X}}
\newcommand{\cm}{\mathcal{M}}
\newcommand{\p}{\mathfrak{p}}
\newcommand{\bp}{{\bar{\mathfrak{p}}}}
\newcommand{\h}{\mathfrak{h}}
\newcommand{\g}{\mathfrak{g}}
\newcommand{\m}{\mathfrak{m}}
\DeclareMathOperator{\Tr}{Tr}
\DeclareMathOperator{\tr}{tr}
\DeclareMathOperator{\Pf}{Pf}
\DeclareMathOperator{\Ad}{Ad}
\DeclareMathOperator{\ad}{ad}
\newcommand{\GamTtM}{\Gamma(T\widetilde{\mathcal{M}})}
\newcommand{\tg}{\tilde{g}}
\newcommand{\tS}{\tilde{S}}
\newcommand{\varpih}{\varpi_\mathfrak{h}}
\newcommand{\varpiz}{\varpi_{0}}
\newcommand{\varpip}{\varpi_\p}
\newcommand{\varpim}{\varpi_\m}
\newcommand{\varpihmod}[2][ ]{\varpi^{ #1}_{\h \:\,#2}}
\newcommand{\Amod}[2][ ]{A^{#1}_{~\,\,#2}}
\newcommand{\alphamod}[2][ ]{\alpha^{ ~\,#1}_{~~\,\,#2}} 
\newcommand{\betamod}[2][ ]{\beta^{\,#1}_{~\,\,#2}} 
\newcommand{\varpipmod}[2][ ]{\varpi^{ ~\,#1}_{\p\,\,#2}}
\newcommand{\betainvmod}[2][ ]{\beta^{-1 \, #1}_{~~\,\,\; \; \;#2}}
\newcommand{\Rmod}[2][ ]{R^{#2}_{\;\; \; #1}}
\newcommand{\om}{\omega}
\newcommand{\Om}{\Omega}
\newcommand{\bOm}{\bar{\Omega}}
\newcommand{\bOmh}{\bar{\Omega}_{\h}}
\newcommand{\bOmm}{\bar{\Omega}_{\m}}
\newcommand{\bOmp}{\bar{\Omega}_{\p}}
\def\lefteqno{\tagsleft@true}\def\righteqno{\tagsleft@false}
\let\save@mathaccent\mathaccent
\newcommand*\if@single[3]{%
	\setbox0\hbox{${\mathaccent"0362{#1}}^H$}%
	\setbox2\hbox{${\mathaccent"0362{\kern0pt#1}}^H$}%
	\ifdim\ht0=\ht2 #3\else #2\fi
}
\newcommand*\rel@kern[1]{\kern#1\dimexpr\macc@kerna}
\newcommand*\widebar[1]{\@ifnextchar^{{\wide@bar{#1}{0}}}{\wide@bar{#1}{1}}}
\newcommand*\wide@bar[2]{\if@single{#1}{\wide@bar@{#1}{#2}{1}}{\wide@bar@{#1}{#2}{2}}}
\newcommand*\wide@bar@[3]{%
	\begingroup
	\def\mathaccent##1##2{%
		%Enable nesting of accents:
		\let\mathaccent\save@mathaccent
		%If there's more than a single symbol, use the first character instead (see below):
		\if#32 \let\macc@nucleus\first@char \fi
		%Determine the italic correction:
		\setbox\z@\hbox{$\macc@style{\macc@nucleus}_{}$}%
		\setbox\tw@\hbox{$\macc@style{\macc@nucleus}{}_{}$}%
		\dimen@\wd\tw@
		\advance\dimen@-\wd\z@
		%Now \dimen@ is the italic correction of the symbol.
		\divide\dimen@ 3
		\@tempdima\wd\tw@
		\advance\@tempdima-\scriptspace
		%Now \@tempdima is the width of the symbol.
		\divide\@tempdima 10
		\advance\dimen@-\@tempdima
		%Now \dimen@ = (italic correction / 3) - (Breite / 10)
		\ifdim\dimen@>\z@ \dimen@0pt\fi
		%The bar will be shortened in the case \dimen@<0 !
		\rel@kern{0.6}\kern-\dimen@
		\if#31
		\overline{\rel@kern{-0.6}\kern\dimen@\macc@nucleus\rel@kern{0.4}\kern\dimen@}%
		\advance\dimen@0.4\dimexpr\macc@kerna
		%Place the combined final kern (-\dimen@) if it is >0 or if a superscript follows:
		\let\final@kern#2%
		\ifdim\dimen@<\z@ \let\final@kern1\fi
		\if\final@kern1 \kern-\dimen@\fi
		\else
		\overline{\rel@kern{-0.6}\kern\dimen@#1}%
		\fi
	}%
	\macc@depth\@ne
	\let\math@bgroup\@empty \let\math@egroup\macc@set@skewchar
	\mathsurround\z@ \frozen@everymath{\mathgroup\macc@group\relax}%
	\macc@set@skewchar\relax
	\let\mathaccentV\macc@nested@a
	%The following initialises \macc@kerna and calls \mathaccent:
	\if#31
	\macc@nested@a\relax111{#1}%
	\else
	%If the argument consists of more than one symbol, and if the first token is
	%a letter, use that letter for the computations:
	\def\gobble@till@marker##1\endmarker{}%
	\futurelet\first@char\gobble@till@marker#1\endmarker
	\ifcat\noexpand\first@char A\else
	\def\first@char{}%
	\fi
	\macc@nested@a\relax111{\first@char}%
	\fi
	\endgroup
}
\setlist[enumerate]{label=\thesection.\arabic{*},resume}
\preto\section{%
	\restartlist{enumerate}%
}
\begin{document}
	\author{
		Jean Thibaut\,\orcidlink{0009-0004-7396-1395}\footnote{jthibaut@cpt.univ-mrs.fr} 
		$\ $and Serge Lazzarini\footnote{serge.Lazzarini@cpt.univ-mrs.fr} \\
		{\normalsize Centre de Physique Théorique}\\
		{\normalsize Aix Marseille Univ, Université de Toulon, CNRS, CPT, Marseille, France.}
	}
	
	\date{\footnotesize \it \today}
	
	\title{Gravity as a deformed
		topological gauge theory}
%	\maketitle

	\begin{abstract}
		We describe gauge theories which allow to retrieve a large class of gravitational theories, including, MacDowell-Mansouri gravity and its topological extension to Loop Quantum Gravity via the Pontrjagin characteristic class involving the Nieh-Yan term. 
     Considering symmetric spaces parametrized by mutations allows to naturally obtain a bare cosmological constant which in particular cases gives rise to a positive effective cosmological constant while having an AdS spacetime.

    Two examples are studied, Lorentzian geometry (including dS and AdS spacetimes) and Lorentz$\times$Weyl geometry.
    In the latter case, we prove that adding a term dependent on torsion and dilations makes the equations of motion exhibit a secondary source for curvature in addition to the usual energy-momentum tensor. This additional source is expressed in terms of the spin density of matter, torsion and their variations.

    Finally, we show that the gauge + matter actions constructed from invariant polynomials are asymptotically topological if one assumes a vanishing bare cosmological constant together with gauge and matter fields having compact support.
	\end{abstract}
    \maketitle
	%\keywords{Topological gravity, Homogeneous spaces, Cartan geometry, Characteristic classes}
    
%	\newpage
	
	\section{Introduction}
	
	There exist several approaches to describe gravity as a gauge theory.
	Two of them are MacDowell-Mansouri (MM) gravity and BF theory, see \emph{e.g.}~\cite{wise_macdowell-mansouri_2010}.
	They can be mathematically described in terms of Cartan geometries on a manifold $\mathcal{M}$ modeled on a homogeneous space $G/H$ where $G,H\subset G$ and $\g,\h$ are respectively Lie groups and their corresponding Lie algebras.
	When the geometry is reductive, namely, $\g=\h\oplus\p$, a $\Ad (H)$-invariant decomposition, the associated $\g$-valued Cartan connection $\varpi$ naturally splits into two parts $\varpi = \varpi_\h \oplus\varpi_\p$, one with values in $\h$ and the other one in $\p=\g / \h$.
	In these formalisms the $\h$-valued part $\varpih$ corresponds to the spin connection $A$ while the $\p$-valued part is arranged to be $\varpip=
        \sqrt{\dfrac{\Lambda_0}{3\epsilon}} \begin{pmatrix}
			0 &  \beta\\
			- \epsilon \bar{\beta} & 0
		\end{pmatrix} $ and is proportional to the vielbein 1-form $\beta$ (tetrad if $dim(\mathcal{M})=4$) where $\bar{\beta} = \eta(\beta)$ with $\eta$ the Minkowski metric.
	
	\smallskip
	For a resulting Lorentzian\footnote{Riemaniann or Galilean spacetimes can also be described by this formalism. For more details see~\cite[p.10]{wise_macdowell-mansouri_2010}. The kind of topological gauge theories developed in this article are also compatible with these choices of Lie groups.} space-time the structure group is given by $H=SO(3,1)$ while the Lie group $G$ is chosen according to the value and sign of the (bare) cosmological constant $\Lambda_0$ as in \cite[p.10]{wise_macdowell-mansouri_2010}:
		\begin{align}
		\left\{
		\begin{array}{lcll}
			G = SO(4,1) & \text{ and }& \epsilon = +1, & \text{ for } \Lambda_0 > 0 \\
			G = SO(3,2) & \text{ and } & \epsilon = -1, & \text{ for } \Lambda_0 < 0 .
		\end{array}
        \right.
	\end{align}
	
	\medskip
	Let us recall the
	\begin{definition}
		Let $\mathcal{M}$ be a manifold, $G$ a Lie group and $H$ a closed subgroup of $G$ such that $dim(G/H)=dim(\mathcal{M})=m$ and
		$\mathcal{P}$ a principal $H$-bundle over $\mathcal{M}$. Let $\mathfrak{X}\in \h$, we denote $\mathfrak{X}^v$ the associated vertical vector field on $\mathcal{P}$.
		A Cartan connection  on $\mathcal{P}$ is equivalent to a $1$-form $\varpi$ on $\mathcal{P}$ with values in $\g$, the Lie algebra of $G$, satisfying the conditions \cite[p128]{kobayashi_transformation_1995}:
		\begin{align*}
			&(a) ~ \varpi(\X^v) = \X,~ \text{ for any } \X \in \h. \\
			&(b) ~ (R_h)^* \varpi = \Ad(h^{-1}) \varpi ~, \forall h \in H ,\text{ where } R \text{ is the right action of } H \text{ on } \mathcal{P} \\
			&\quad\ \, \text{and } \Ad \text{ the adjoint representation of } H\subset G \text{ on the Lie algebra } \g. \notag \\
			&(c) ~ \varpi(X) \neq 0 \text{ for every nonzero vector field } X \in \text{Vect}(\mathcal{P}).
		\end{align*}
	\end{definition}
In particular, condition (c) implies that tangent spaces to $\mathcal{M}$ are isomorphic to the quotient $\g/\h$ of dimension $m$.
    
	\medskip
	Let $\bOm = d\varpi + \varpi \wedge \varpi = \bOmh \oplus \bOmp $ be the curvature of the Cartan connection $\varpi= \varpi_\h \oplus\varpi_\p$ and $\star$ an internal Hodge as used in \cite{montesinos_self-dual_2001} such that $\star \kappa^{ab} = \tfrac{1}{2} \varepsilon^{ab}_{~~kl} \kappa^{kl} $.
	
	\medskip
	For $m=4$, the MacDowell-Mansouri action \cite{wise_macdowell-mansouri_2010} is given by:
	\begin{align} \label{eq:MM}
		S_{MM}[\varpi] &=  - \dfrac{3}{2G\Lambda_0}\int_{\mathcal{M}}  \tr(\bOmh \wedge \star \bOmh) = - \dfrac{3}{4G\Lambda_0}\int_{\mathcal{M}} \varepsilon_{abcd} \,\bOmh^{ab} \wedge \bOmh^{cd}
	\end{align}
    where $\bOmh = R - \dfrac{\Lambda_0}{3} \beta \wedge \bar{\beta}$.
	This action is $\Ad (H)$-invariant, {\em i.e.} Lorentz invariant.
	
	\medskip
	On the other hand, the Euler characteristic class $E(\mathcal{M})$ of a manifold $\mathcal{M}$ is null if $\dim(\mathcal{M})=m$ is odd, otherwise it can be linked to the $\tfrac{m}{2}$-th Pontrjagin class $P_{m/2}$ 
	(more details can be found in \cite{nakahara_geometry_2018}).
	\begin{definition}
		In the case of an orientable differentiable manifold $\mathcal{M}$ of \emph{even} dimension $m$, the Euler characteristic class can always \footnote{Since as indicated in \cite{nakahara_geometry_2018} it is always possible to reduce the structure group of $T\mathcal{M}$ to $SO(m)$ via the Gram-Schmidt process.} be expressed in terms of curvature as:
		\[
		E(\mathcal{M}) = \Pf\left(\dfrac{\Om_{\mathfrak{so}}}{2\pi}\right) 
		\]
		where $\Pf(\Om_{\mathfrak{so}})$ is the Pfaffian of the $\mathfrak{so}(m)$-valued curvature 2-form $\Om_{\mathfrak{so}}$ originating from the $\mathfrak{so}(m)$-valued connection $1$-form $\om_{\mathfrak{so}}$. This is an $SO(m)$-invariant polynomial. 
	\end{definition}
	For a $4$-dimensional manifold one has: 
	\[
	E(\mathcal{M}) = \tfrac{1}{2(4\pi)^2}\, \varepsilon_{abcd}\, \Om_{\mathfrak{so}}^{ab} \wedge \Om_{\mathfrak{so}}^{cd}
	\]
	More generally, let us give the
	\begin{definition}
		The characteristic number $\mathdutchcal{C}$ associated to a given characteristic class $C(\mathcal{M})$ of $\mathcal{M}$ corresponds to the integral over $\mathcal{M}$ of its characteristic polynomial :
		\[
		\mathdutchcal{C} = \int_{\mathcal{M}} C(\mathcal{M}).
		\]
	\end{definition}
    In particular, for $m=4$, the Euler number is thus:
	\begin{align}
		\label{Euler number}
		\mathdutchcal{E} = \dfrac{1}{2(4\pi)^2} \int_{\mathcal{M}} 
		\varepsilon_{abcd}\, \Om_{\mathfrak{so}}^{ab} \wedge \Om_{\mathfrak{so}}^{cd}.
	\end{align}
	One can recognize the MacDowell-Mansouri action \eqref{eq:MM} is proportional to the Pfaffian of the $\h=\mathfrak{so}(3,1)$-valued part $\bOmh$ of the curvature, which is $SO(3,1)$-invariant. This provides a very clear mathematical interpretation to the MacDowell-Mansouri action as the deformation of a Schwarz-type\footnote{A more detailed account of topological gauge theories of Schwarz-type can be found in \cite{kaul_schwarz_2005,birmingham_topological_1991,bartlett2005categoricalaspectstopologicalquantum}. The Euler number also yields a topological field theory according to \cite{atiyah_topological_1990}. See also \cite{thuillier_remarks_1998}.} topological gauge theory:
	\begin{align}
    \label{eq:MM-Pf}
		S_{MM}[\varpi] = - \dfrac{24\pi^2}{ G \Lambda_0}\, \int_\mathcal{M} \Pf (\dfrac{\bOmh}{2\pi})
	\end{align}
	where $\bOmh = R - \dfrac{\Lambda_0}{3} \beta \wedge \bar{\beta}$, and $- \dfrac{\Lambda_0}{3} \beta \wedge \bar{\beta}$ acts as the deformation of the $\bOm_\mathfrak{so} = R$ curvature.
 
	\medskip
	Considering Chern-Simons theory as an example of topological field theory of the Schwarz type, another characteristic class which will be of interest in this work is the Pontrjagin characteristic class. 		
	As described in \cite{nakahara_geometry_2018}, one has:
	\begin{definition}
		Let $L$ be a $k$-dimensional real vector bundle over an $m$-dimensional manifold $\mathcal{M}$. $\Om$ is the curvature associated to a connection $\omega$ on a principal $G$-bundle over $\mathcal{M}$.
		Introducing orthonormal frames allows to reduce the structure group to $O(k)$ and one has $\mathfrak{o}(k) = \mathfrak{so}(k)$ at the level of Lie algebras.
		The $\mathfrak{so}(k)$-valued curvature is written $\Om_{\mathfrak{so}}$.
		The corresponding total Pontrjagin class is then defined as: $P(\Om_{\mathfrak{so}}) = \det\Bigl(\mathds{1} + \dfrac{\Om_{\mathfrak{so}}}{2\pi} \Big)$.
	\end{definition}
	
	For a 4-dimensional manifold $\mathcal{M}$, the Pontrjagin number turns out to be: 
	\begin{align}
		\label{Pontrjagin number}
		\mathdutchcal{P} = \int_{\mathcal{M}} P(\Om_{\mathfrak{so}}) = \int_{\mathcal{M}} P_1(\Om_{\mathfrak{so}}) = \int_{\mathcal{M}} -\frac{1}{8\pi^2} \Tr( \Om_{\mathfrak{so}} \wedge \Om_{\mathfrak{so}})\ ,
	\end{align}
	where $P_1$ is the $1^{\text{st}}$ Pontrjagin class.
	
	As stated in \cite[Chern-Weil theorem]{nakahara_geometry_2018}, a nice feature of characteristic classes is that up to an exact term, they do not depend on the choice of connection. This fact allows us to build actions that are invariant under the choice of connection.

   Generalizing the observation made for \eqref{eq:MM-Pf}, the gauge invariant actions studied in this work are given by the linear combination:
    \begin{align} \label{eq:action_start}
       S_G [\varpi] = \int_\mathcal{M} \big(r P(\bOm) + e \Pf  (\dfrac{F}{2\pi}) + y \det ( \mathds{1} + \dfrac{\bOmh}{2\pi})\big)
    \end{align}
where the first term is the Pontryagin number of the manifold related to the $\g$-valued Cartan curvature $\bOm$ while the two other terms are only invariant polynomials corresponding to the Pfaffian $\Pf (\dfrac{F}{2\pi})$ and the determinant $\det ( \mathds{1} + \dfrac{\bOmh}{2\pi})$. They are respectively evaluated on $F$ (the $\mathfrak{so}(m-1,1)$-valued part of the curvature $\bOmh$) and $\bOmh$ ($\h$-valued) parts of the Cartan curvature.    
       
If $\g/\h = \m$ obeys the commutation relation $[\m,\m] \subset \h$ (symmetric Lie algebra), then $G/H$ is called a reductive homogeneous space \cite{donnelly1977chern}. In fact, this is the case for the construction given in \cite{wise_macdowell-mansouri_2010} for the MacDowell-Mansouri action. We shall extend this idea, and we shall see that Cartan geometries modeled on reductive homogeneous spaces are intrinsically linked to the notion of cosmological constant in specific cases. In the examples given in this paper, reductive homogeneous spaces will be implemented at the infinitesimal level through symmetric Lie algebras. Using the concept of mutation, two symmetric Lie algebras where $H$ is either the Lorentz or Lorentz$\times$Weyl group will be given as resulting from a mutation procedure on which a Cartan geometry will be constructed along the line of \cite{sharpe2000differential}.

    \bigskip
    The paper is organized as follows.
	Section~\ref{section SO 5 ETC} will introduce and describe the gauge theories associated to the Lie groups $G = SO(4,1)$ (respectively $SO(3,2)$) for $\Lambda_0 > 0$ (respectively $\Lambda_0 < 0$) considered as "mutations" (according to \cite{sharpe2000differential}) of the group $G=ISO(3,1)$. All these three groups contain $H=SO(3,1)$ which will be the structure group of the principal bundle $\mathcal{P}$. Then, the gauge invariant action \eqref{eq:action_start} will be studied for these Lie groups $G$ and $H$. 
   Subsenquently, a very particular linear combination yields the Holst + Euler and Pontrjagin of the curvature $R$ of the spin connection + Nieh-Yan + bare cosmological constant $\Lambda_0$ terms.
    Additionally, in this construction, the coupling constants of these different terms are inherently linked together.        
    Results of Section~\ref{section SO 5 ETC} are adaptable to other spacetimes whether Riemannian or Galilean or for other Lie groups. 
 
	Section~\ref{section mobius} will be devoted to the construction of a similar action for the case where $G=SO(4,2)/\{\pm I\}$ (the Möbius group) and $H = CO(3,1) = SO(3,1)\times(\mathbb{R}_+\setminus \{0\})$ the Lorentz$\times$Weyl group of signature $(3,1)$, so that the infinitesimal Klein geometry $(\g,\h)$ in the sense of \cite{sharpe2000differential} turns out to be a symmetric Lie algebra. Then, mutating the $(\g,\h)$ model allows to mimic what we did in the previous Lorentz case. Reducing the quotient $\g/\h$ to be of dimension $4$ by constraining  the "pair of frames" $(\alpha,\beta)$ one can retrieve an action comprising all terms described in the previous example + a kinetic term for a scalar field (dilations).

  In both examples, we study the equations of motion associated to the total action (gauge + matter). It is especially shown that, in the Möbius case, if one adds an interaction 
  between dilations and torsion $T$, the equations of motion consist of Einstein’s equations modified by the Holst term with an additional source term for curvature depending on specific variations of spin density of matter and on torsion.
		
   Lastly, in Section \ref{Topological total action}, by studying invariant polynomials for reductive homogeneous spaces, it will be shown that (upon restricting to connections with compact supports on $\mathcal{M}$) the mutated geometries can be considered as deformations such that the gauge invariant actions constructed in the previous sections are asymptotically invariant under the choice of this type of connections, and thus lead to topological theories.
 
	\smallskip
	In the following, latin and greek letters will respectively be used for algebraic and space-time indices.
	
	\section{Invariant polynomials for a Cartan connection with Lorentzian spacetimes}
	
	\label{section SO 5 ETC}
	
	\subsection{Lorentzian (dS and AdS) spacetimes as mutations of an \\ $ISO(3,1)/SO(3,1)$ Cartan geometry}

	Let $G$, $H\subset G$ be Lie groups with corresponding Lie algebras $\g =\h \oplus \p$
	such that $G/H$ is reductive \cite{sharpe2000differential}, that is that the splitting $\g =\h \oplus \p$ is $\Ad(H)$-invariant.
	
	One such reductive model geometry is $G/H = ISO(3,1)/SO(3,1)$, with algebra $\g=\overbrace{\mathfrak{so}(3,1)}^\h \oplus \overbrace{\mathfrak{\mathbb{R}}^{3,1}}^\p $ where
	$G=ISO(3,1)= SO(3,1) \ltimes \mathbb{R}^{3,1}$ and $H=SO(3,1)$ is defined by the relation 
	$\Lambda^{\text{T}} \eta \Lambda =~\eta, \forall \Lambda \in SO(3,1)$ with $\eta$ the Minkowski metric:
	\begin{align*}
		\eta = \begin{pmatrix}
			-1 & 0 \\
			0 & +I_3
		\end{pmatrix}\ .
	\end{align*}
	A group element $g \in G=ISO(3,1)$ is parametrized as
	${\displaystyle 
		g = \begin{pmatrix}
			\Lambda &  a\\
			0 & 1
		\end{pmatrix}
	}$
	where $a \in \mathbb{R}^{3,1}$.
		Linearization leads to:
	\begin{align*}
		\varphi &=
		\begin{pmatrix}
			\varphi_\h & \varphi_\p \\
			0 & 0
		\end{pmatrix}
		\in \mathfrak{iso}(3,1)
		\text{ where } \varphi_\h \in \mathfrak{so}(3,1)~,~ \varphi_\h^{\text{T}} \eta + \eta \varphi_\h =0, ~ \text{and}~ \varphi_\p \in \mathbb{R}^{3,1}\\[2mm]
		&= \begin{pmatrix}
			\frac{1}{2}\varphi^{ab}_\h J_{ab} & \varphi^a_\p e_a \\
			0 & 0
		\end{pmatrix} \in \mathfrak{iso}(3,1)
	\end{align*}
    with $\{J_{ab}\}_{a,b=0,...,3}$ the generators of $\h =  \mathfrak{so}(3,1)$ and $\{e_a\}_{a=0,...,3}$ a basis of $\p = \mathbb{R}^{3,1}$.

    \medskip
	In \cite[see p.218]{sharpe2000differential}, Sharpe introduces the notion of mutation of a Cartan geometry $G/H$ according to the	
	\begin{definition} \label{defmutation}
		Let $(\g,\h)$ and $(\g',\h)$ be two model geometries.
		A mutation map corresponds to an $\Ad (H)$ module isomorphism $\mu : \g \rightarrow \g'$ ({\sl i.e.} $\mu(\Ad_h(u)) = \Ad_h(\mu(u))\, \forall u\in\g$) satisfying:
		\begin{flalign*}
			(i) &\ \mu_{|\h} = \mathrm{id}_\h \\
			(ii) &\ \bigl[ \mu(u) , \mu(v) \bigr] = \mu \bigl([ u,v ] \bigr) \text{ mod } \h, \forall u,v \in \g.
		\end{flalign*}
		The model geometry $(\g',\h)$ then corresponds to the mutant of the model geometry $(\g,\h)$ with the same group $H$.
	\end{definition}
	
	In the case where the model geometry is $(\g,\h)= \bigr(\mathfrak{iso}(3,1), \mathfrak{so}(3,1)\bigr)$, a mutation that will be particularly relevant for us is given by the mutation map\footnote{This mutation map generalizes the examples given in \cite[Example 6.2, p.218]{sharpe2000differential}.} $\mu : \g \rightarrow \g'$:
 \begin{align}
		\label{mutation}
		\varphi_\g =
		\begin{pmatrix}
			\varphi_\h	&  \varphi_\p \\
			0 & 0
		\end{pmatrix}
		\mapsto
		\varphi_{\g'} = \mu(\varphi_\g)=
		\begin{pmatrix}
			\varphi_\h	& k_1 \varphi_\p \\
			k_2 \bar{\varphi}_{\p} & 0
		\end{pmatrix} \in \g'
	\end{align}
	with $\bar{\varphi}_{\p} =  \varphi_\p^{\text{T}}\eta$ and where $k_1,k_2$ may a priori be considered as $0$-form scalar fields.
 It is worthwhile to notice that this mutation transforms the geometry modeled on $G/H$ given earlier to the Lorentzian Cartan geometry modeled on $G'/H$ with elements $g' \in G'$ obeying the group relation $g'^TNg' = N$, where $N$ is the following metric: $$N=
    \begin{pmatrix}
			\eta & 0 \\
			0 & -k_1/k_2
	\end{pmatrix}\ $$
    
 \smallskip\noindent
One can observe that $ (\g',\h) $ reduces to dS and AdS geometries respectively for the two particular cases, see {\sl e.g.} \cite{wise_macdowell-mansouri_2010}:	\begin{align}
		\left\{
		\begin{array}{lcll}
			\dfrac{k_1}{k_2}= -1\Leftrightarrow  G = SO(4,1) & \text{ and }& \epsilon = +1, & \text{ for } \Lambda_0 > 0, \\
			& & &\\
			\dfrac{k_1}{k_2}=+1 \Leftrightarrow G = SO(3,2) & \text{ and } & \epsilon = -1, & \text{ for } \Lambda_0 < 0.
		\end{array}
		\right.
	\end{align}
	Throughout the paper $k_1$ and $k_2$ will be considered as real numbers.
    It is however of physical interest to consider the mutation parameters as scalar fields, thus dealing with point-wisely dependent mutations. This is studied in \cite{Thibaut:2025lwx}. Other papers considering dynamical versions of physical constants in the context of gravity include for instance \cite{alexander_zero-parameter_2019,sengupta2025cosmological}.
	
	\medskip
	On the other hand, the canonical decomposition of a symmetric Lie algebra $\g = \h \oplus \m$ (see \cite[Prop.2.1, Chapter XI]{kobayashi_foundations_1996}
 for more examples) corresponds to the commutators:
	\begin{align*}
		[\h,\h] \subset \h, \qquad [\h,\m] \subset \m, \qquad [\m,\m] \subset \h.
	\end{align*}
	Therefore, one can see that the mutated Lie algebra $\g'= \h  \oplus \m$ of \eqref{mutation} corresponds to a symmetric Lie algebra for $k \neq 0$, as a particular case of a reductive geometry with $\m = \g'/\h \simeq \p = \g/\h$.
	\label{generators}
	Moreover, the generators of $\h = \mathfrak{so}(3,1)$, and elements of the basis of $\p = \mathfrak{\mathbb{R}}^{3,1}$, 
    $\p^* = \mathfrak{\mathbb{R}}^{3,1*}$ and $\bp = \bar{\mathbb{R}}^{3,1}$ obey the relations (the isomorphism $\mathrm{End}(\mathbb{R}^n) \simeq \mathbb{R}^n\otimes{\mathbb{R}^{n*}}$ is used):
		\begin{align*}
		\tilde{e}^b (e_a) = \delta_{~a}^b \text{ (dual basis)}, \qquad \qquad \qquad & \bar{e}_a (e_b) = \eta_{ab}, & J_{ab} = e_a \bar{e}_b - e_b \bar{e}_a
	\end{align*}
	and 
	\begin{align*}
		J_{ab} e_c &= (J_{ab})^r_{~c} e_r  = (\eta_{bc}\delta^r_{~a}-\eta_{ac}\delta^r_{~b}) e_r = \eta_{bc} e_a - \eta_{ac} e_b \\[2mm]
		\bar{e}_c J_{ab} &= ( \eta_{ac} \delta^r_b - \eta_{bc} \delta^r_a) \bar{e}_r = \eta_{ac} \bar{e}_b - \eta_{bc} \bar{e}_a \\[2mm]
		[J_{ab},J_{cd}] &= \eta_{bc}J_{ad} + \eta_{ad} J_{bc} + \eta_{bd}J_{ca} + \eta_{ac}J_{db} = \tfrac{1}{2} C^{ef}_{ab,cd} J_{ef}
	\end{align*}
	where the structure constants $C^{fe}_{ab,cd}=-C^{ef}_{ab,cd}$ can be easily computed.
	
	\smallskip
	An element $\varphi = \varphi_\h \oplus \varphi_\m \in \g'= \h  \oplus \m$ admits the following matrix representation: 
	\begin{align}
		\varphi &= 
		\begin{pmatrix}
			\varphi_\h	& k_1 \varphi_\p \\
			k_2 \bar{\varphi}_{\p} & 0
		\end{pmatrix} 
		=
		\begin{pmatrix}
			\dfrac{1}{2}\varphi_\h^{ab} J_{ab}	& k_1 \varphi_\p^a e_a \\
			k_2 \varphi_{\p}^a \bar{e}_a & 0
		\end{pmatrix}
		=
		\dfrac{1}{2} \varphi_\h^{ab}
		\begin{pmatrix}
			J_{ab}	& 0 \\
			0 & 0
		\end{pmatrix}
		\oplus
		\varphi_\m^a
		\begin{pmatrix}
			0	& k_1 e_a \\
			k_2 \bar{e}_a & 0
		\end{pmatrix} \\
		&= \dfrac{1}{2} \varphi_\h^{ab} J_{ab} \oplus \varphi_\m^a M_a
	\end{align}
	where $J_{ab}$ is canonically identified with the matrix representation $J_{ab} = \begin{pmatrix}
        J_{ab} & 0 \\ 0 & 0
    \end{pmatrix}$ of $\h\subset\g'$,  $\varphi^a_\m = \varphi^a_\p$ and $M_a = \begin{pmatrix}
		0	& k_1 e_a \\
		k_2 \bar{e}_a & 0
	\end{pmatrix} = \mu \big( \begin{pmatrix}
		0	& e_a \\
		0 & 0
	\end{pmatrix} \big)$ are the generators of $\m$ constructed out of the canonical basis of $\p = \mathfrak{\mathbb{R}}^{3,1}$ and the parameters of the mutation.
	
	Let $\varphi,\varphi' \in \g'= \h  \oplus \m$,  their Lie bracket is given by the commutator: 
	\begin{align*}
		[\varphi,\varphi'] &= 
		\begin{bmatrix}
			\begin{pmatrix}
				\varphi_\h	& k_1 \varphi_\p \\
				k_2 \bar{\varphi}_{\p} & 0
			\end{pmatrix},
			\begin{pmatrix}
				\varphi'_\h	& k_1 \varphi'_\p \\
				k_2 \bar{\varphi}'_{\p} & 0
			\end{pmatrix}
		\end{bmatrix} 
		=	\begin{pmatrix}
			[\varphi_\h , \varphi'_h] + k_1k_2 (\varphi_\p \bar{\varphi}'_{\p} -  \varphi'_\p \bar{\varphi}_{\p}) & k_1 (\varphi_\h \varphi'_\p - \varphi'_\h \varphi_\p) \\
			k_2(\bar{\varphi}_{\p} \varphi'_\h - \bar{\varphi}'_{\p} \varphi_\h) & 0
		\end{pmatrix} \\
		& = \bigl( \dfrac{1}{8} \varphi_\h^{ab} \varphi_h^{\prime cd}  C_{ab,cd}^{ef} + \dfrac{k_1k_2}{2} ( \varphi_\p^{e} \varphi_\p^{\prime f} - \varphi_\p^{f} \varphi_\p^{\prime e} )  \bigr) J_{ef}
		\oplus 
		\dfrac{1}{2} (  \varphi_\h^{ab} \varphi_\p^{\prime c} - \varphi_\h^{\prime ab} \varphi_\p^{c} ) ( \eta_{bc} \delta^r_a - \eta_{ac} \delta^r_b  ) M_r %\numberthis
	\end{align*}
    where we have used the commutators $[J_{ab},M_c]=\eta_{bc} M_a - \eta_{ac} M_b$ and $[M_a,M_b]=k_1k_2\,J_{ab}$.\footnote{It is interesting to notice that the product of the mutation parameters $k_1$ and $k_2$ can be related to a deformation parameter $\alpha$ as given in \cite[eq.(2.1)]{chirco2025gravityadsyangmillsnatural}.}

 \medskip
	In order to link our notation with the one used in \cite{wise_macdowell-mansouri_2010}, we identify $k_1 = \dfrac{1}{\ell}$ and $k_2 = \dfrac{k}{\ell}$.
    A Cartan connection $\varpi$ associated to this mutated $G/H$ Cartan geometry can be decomposed as:
	\begin{align}
		\varpi = \varpih \oplus \varpim
		=
		\begin{pmatrix}
			\dfrac{1}{2}\varpih^{ab} J_{ab}	& \dfrac{1}{\ell} \varpip^a e_a \\
			\dfrac{k}{\ell} \varpip^a \bar{e}_a & 0
		\end{pmatrix}
		=
		\begin{pmatrix}
			\dfrac{1}{2}A^{ab} J_{ab}	& \dfrac{1}{\ell}\beta^a e_a \\
			\dfrac{k}{\ell}\beta^a \bar{e}_a & 0
		\end{pmatrix}
		= \begin{pmatrix}
			A & \dfrac{1}{\ell}\beta \\
			\dfrac{k}{\ell} \bar{\beta} & 0
		\end{pmatrix}
	\end{align}
	where $\varpih^{ab}=\varpihmod[ab]{\mu} dx^\mu = \Amod[ab]{\mu} dx^\mu $ corresponds
	to the spin connection and $\varpim = \beta^a M_a$ is the $\g'/\h=\m$-valued soldering form (see \cite[Chap.5, §3, Definition 3.1]{sharpe2000differential}) after mutation.
	
	The curvature associated to the covariant derivative and the Cartan connection is:
	\begin{align}
		\bOm & = d \varpi + \dfrac{1}{2} [\varpi,\varpi] = \bOmh \oplus \bOmm
        = \begin{pmatrix}
			R + \dfrac{k}{\ell^2} \beta\wedge \bar{\beta}& 0 \\
			0 & 0
		\end{pmatrix} + \dfrac{1}{\ell} \begin{pmatrix}
			0 & T \\
			k\bar{T} & 0
		\end{pmatrix}  \label{eq:curvMut}
        \\
		& = \bigl(d \varpih + \dfrac{1}{2} [\varpih,\varpih] + \dfrac{1}{2} [\varpim,\varpim]\bigr) 
		\\
		& \qquad \oplus 
		\bigl( d \varpip^r + \dfrac{1}{4} ( \varpihmod[ab]{\mu} \varpipmod[c]{\nu} - \varpihmod[ab]{\nu} \varpipmod[c]{\mu} ) ( \eta_{bc} \delta^r_a - \eta_{ac} \delta^r_b  ) dx^\mu \wedge dx^\nu \bigr) \otimes M_r \notag \\
		& = \bigl( \dfrac{1}{2} \partial_{\mu} \Amod[rs]{\nu} + \dfrac{1}{32}  \Amod[ab]{\mu} \Amod[cd]{\nu} C_{ab,cd}^{rs} 
		+ \dfrac{k}{4} \xi^{rs}_{~~\mu \nu} \bigr) dx^\mu \wedge dx^\nu \otimes J_{rs} \nonumber \\
		& \qquad 
		\oplus 
		\bigl( \partial_{\mu} \betamod[r]{\nu} + \dfrac{1}{2} \Amod[ab]{\mu} \betamod[c]{\nu} ( \eta_{bc} \delta^r_a - \eta_{ac} \delta^r_b ) \bigr) dx^\mu \wedge dx^\nu \otimes M_r
	\end{align}
    with $\bOmm$ the torsion of the Cartan connection $\varpi$ with values in $\g'/\h =\m$. 
        For further use it is worthwhile to notice that the graded bracket
    \begin{align} 
       &\tfrac{1}{2} [\varpim,\varpim] = \dfrac{k}{\ell^2} \begin{pmatrix}
			\beta\wedge \bar{\beta}& 0 \\
			0 & 0
		\end{pmatrix}
        = \begin{pmatrix}
			k\, \xi & 0 \\
			0 & 0
		\end{pmatrix} \in \h \label{eq:mm}
    \end{align}
    with $\xi = \dfrac{1}{\ell^2} \beta \wedge \bar{\beta}= \dfrac{1}{2} \xi^{ab} J_{ab} = \dfrac{1}{4} \xi^{ab}_{~~\mu\nu} dx^\mu \wedge dx^\nu \otimes J_{ab} $, 
	where $ \xi^{ab} = \dfrac{1}{\ell^2} \varpip^{a} \wedge \varpip^{b} $ 
	and $ \xi^{ab}_{~~\mu\nu} = \dfrac{1}{\ell^2} (\varpipmod[a]{\mu} \varpipmod[b]{\nu} - \varpipmod[a]{\nu} \varpipmod[b]{\mu})$ an additional term due to the symmetric Lie algebra structure.
    Let us add that \eqref{eq:curvMut} and \eqref{eq:mm} exemplify \cite[Proposition 6.3, p.218]{sharpe2000differential}.

    \medskip\noindent
	We also identify the spacetime curvature and torsion of Einstein-Cartan gravity:
	\begin{align}
		R &= \bOmh - k \xi = d A + \dfrac{1}{2} [A,A] = dA+A^2 \\
		T &= d\beta + A\wedge\beta = T^a e_a =\bigl( \partial_{\mu} \betamod[r]{\nu} + \dfrac{1}{2} \Amod[ab]{\mu} \betamod[c]{\nu} ( \eta_{bc} \delta^r_a - \eta_{ac} \delta^r_b ) \bigr) dx^\mu \wedge dx^\nu \otimes e_r
	\end{align}
	
	The Bianchi identity $D\bOm = d\bOm + [\varpi,\bOm] =0$ splits as:
		\begin{align}
		\label{New Bianchi}
		(D\bOm)_\h &= dR +[A,R] + \dfrac{k}{\ell^2}  \bigl(\underbrace{d(\beta \wedge \bar{\beta}) + [A,\beta \wedge \bar{\beta}] + \beta \wedge \bar{T} - T\wedge \bar{\beta}}_{=\,0} \bigr)
		= dR +[A,R] = 0 
		\notag \\
		\\[-4mm]
		(D\bOm)_\m &= \dfrac{1}{\ell} (dT + A \wedge T -
		R \wedge \beta)= 0
		\notag
	\end{align}
	with $\bar{T}= d\bar{\beta} + \bar{\beta} \wedge A =T^a\bar{e}_a = \eta(T)$.
	
	\subsection{Hodge $*$-operator}\label{Hodge operator}
	
	The Hodge operator defined on differential forms on the manifold $\mathcal{M}$ requires a metric $g$ on $\mathcal{M}$. According to the reductive Cartan geometry $G/H$ at hand, $\g = \h \oplus \m$,\footnote{In this Section, one denotes $\g$  the mutated algebra $\g'$ defined in \eqref{mutation} and $\p \simeq \m$.} the Killing form $K_\g(\varphi,\varphi') = \Tr(\ad_{\varphi} \circ \ad_{\varphi'})$ on $\g$ splits into two parts: 
	\begin{align}
		K_\h : \h \times \h \rightarrow \mathbb{R} & \text{ and } K_\m : \m \times \m \rightarrow \mathbb{R}
	\end{align}
    where the trace is performed with respect to the generators $J_{ab}, a<b$ and $M_a$.\footnote{In full generality, one has $\Tr(\ad_{\varphi} \circ \ad_{\varphi'})=(m-1)\Tr(\varphi_\h  \varphi'_\h) + 2k_1k_2(m-1) \eta(\varphi_\p , \varphi'_\p)$.}
 
    Let $\kappa$ and $\zeta$ be real scalars.
    In this case we choose $h= \kappa  K_\g$ as the metric on the Lie algebra $\g$ and $g = \zeta \varpim^* h = \kappa \zeta\varpim^* K_\m$. It corresponds to the pullback of the $\m$-part of the Killing form on $\g$ by $\varpim$ the $\m$-part of the Cartan connection up to the factor $\kappa \zeta$. 
	
	For a manifold $\mathcal{M}$ of dimension $m=4$ with Lorentzian Cartan geometry defined by the mutation \eqref{mutation} with $k_1 = 1/\ell$ and $k_2 = k/\ell$: 
    \begin{align} \label{eq:Killing_Lorentz}
		K_\g (\varphi,\varphi') & = K_\h(\varphi_\h , \varphi'_\h) + K_\m(\varphi_\m , \varphi'_\m) = 3 \Tr(\varphi_\h  \varphi'_\h) + \dfrac{6k}{\ell^2} \eta(\varphi_\p , \varphi'_\p) \notag \\
		&= - 3 \eta_{ac} \eta_{bd} \varphi_\h^{ab} \varphi_\h^{\prime cd} + \dfrac{6k}{\ell^2} \eta_{ab} \varphi_\m^a \varphi_\m^{\prime b}
	\end{align}
	with $K_\m(M_a,M_b)=K_{\m,ab} = \dfrac{6k}{\ell^2} \eta_{ab}$ and thus, for any vector fields $X,Y$ on $\mathcal{M}$,
	\[
	g (X,Y) = \kappa \zeta (\varpim^*K_\m) (X,Y) = \dfrac{6k \kappa \zeta }{\ell^2} \eta \bigl( \beta(X) , \beta(Y) \bigr) = \dfrac{6k \kappa \zeta }{\ell^2} \tilde{g} (X,Y)
	\]
    where $\tilde{g} = \beta^* \eta$ corresponds to the metric usually defined in the tetrad formalism.
	
	Let $\omega \in \Omega^{r}(\mathcal{M},\g) $ be an $r$-form on $\mathcal{M}$ with values in $\g$.
	Its local trivialization in a given chart is:
	\begin{align}
		\omega= \dfrac{1}{r!} \omega_{\mu_1\mu_2...\mu_r}dx^{\mu_1}\wedge dx^{\mu_2}\wedge ... \wedge dx^{\mu_r} .
	\end{align}
	    
	Let $\varepsilon_{a_1a_2...a_m}$ be the Levi-Civita symbol in dimension $m$, such that $\varepsilon_{12...m}=1$. From the metric $g$ and its determinant $|g|$, one can define a Hodge star operator $*$ that acts on $\omega$ as:
	\begin{align}
		*\omega = & \dfrac{1}{r!} \sqrt{|g|}\,\omega_{\mu_1...\mu_r} 
		g^{\mu_1 \nu_1}...g^{\mu_r \nu_r} \varepsilon_{\nu_1...\nu_m} dx^{\nu_{r+1}}\wedge ... dx^{\nu_m}
	\end{align}
	In order to stick with the standard literature, the volume form on $\mathcal{M}$ is denoted
	\[
	dvol = \sqrt{|\tilde{g}|} d^m x = \sqrt{|\tilde{g}|} \dfrac{1}{m!}\varepsilon_{\mu_1 \mu_2 ... \mu_m} dx^{\mu_1} \wedge dx^{\mu_2} \wedge ... \wedge dx^{\mu_m} 
	= \dfrac{1}{m!} \varepsilon_{a_1 a_2 ... a_m} \betamod[a_1]{} \wedge \betamod[a_2]{} \wedge ... \wedge \betamod[a_m]{}.
	\]

	\subsection{Associated action}
	
	In the remainder of this article we relate the Levi-Civita symbol to the Levi-Civita tensor according to the relations: \[
 \varepsilon_{abcd}= \varepsilon_{\mu_1 \mu_2 \mu_3 \mu_4} \betainvmod[\mu_1]{a} \betainvmod[\mu_2]{b} \betainvmod[\mu_3]{c} \betainvmod[\mu_4]{d} \quad \text{and} \quad
\varepsilon^{abcd} = \varepsilon^{\mu_1 \mu_2 \mu_3 \mu_4} \betamod[a]{\mu_1} \betamod[b]{\mu_2} \betamod[c]{\mu_3} \betamod[d]{\mu_4}.
 \]

The trace $\Tr$ will be used to compute the Pontrjagin number associated to the  $\g$-valued curvature $\bOm$. Here, $\forall \varphi,\varphi' \in \g=\h\oplus\m$, $\Tr(\varphi\varphi') = - \eta_{ac} \eta_{bd}\varphi_\h^{ab} \varphi_\h^{'cd} + \dfrac{2k}{\ell^2} \eta_{ab} \varphi_\m^a \varphi_\m^{'b} = K_\g (\varphi,\varphi')/3$,
see \eqref{eq:Killing_Lorentz}.
		
\medskip
	By combining linearly the invariant polynomials $\Pf (\dfrac{\bOmh}{2\pi})$ and $\Tr (\bOmh^2)$ (corresponding respectively to deformations of the Euler and Pontrjagin numbers of $R$ with deformation $k\xi$) with the Pontrjagin number $\mathdutchcal{P}$ (see \eqref{Pontrjagin number}) associated to $\bOm$, one can build the action (for $e,r,y$ real numbers\footnote{Complex numbers may be taken for example to describe a complex formulation \cite[p.13]{rezende_4d_2009} 
    of Loop Quantum Gravity (LQG) with a complex Barbero-Immirzi parameter $\gamma = \dfrac{2\alpha_1}{2\alpha_2-\alpha_5} = \dfrac{e}{r+2y}$.}):
	\begin{align*}
		\label{gauge action}
		S_G[\varpi] &= r \mathdutchcal{P}(\overline{\Om}) + \int_\mathcal{M} (e \Pf (\dfrac{\overline{\Om}_\h}{2\pi}) + y \det ( \mathds{1} + \dfrac{\bOmh}{2\pi})) \\
  		&= r \mathdutchcal{P}(\bOm) + \int_\mathcal{M} \bigl(e \Pf (\dfrac{\bOmh}{2\pi}) - \dfrac{y}{8\pi^2} \Tr (\bOmh\bOmh) \bigr)
        = \int_{\mathcal{M}}  \Bigl( e \Pf\left( \tfrac{\bOmh}{2\pi} \right) + r P( \bOm ) - \dfrac{y}{8\pi^2} \Tr (\bOmh\bOmh)\Bigr) \\
		%%%%%%%%%%%%%%%%
		& = \int_{\mathcal{M}} \Bigl(
		\dfrac{ e }{2 (4\pi)^2} 
		\varepsilon_{abcd} \bOmh^{ab} \wedge \bOmh^{cd}  
		- \dfrac{r }{8\pi^2} \Tr (\bOm \wedge \bOm)
		- \dfrac{y }{8\pi^2} \Tr (\bOmh \wedge \bOmh)
		\Bigr) \\
		%%%%%%%%%%%%%%%%
		&= \int_{\mathcal{M}} \Bigl(
		\dfrac{e }{2(4\pi)^2} 
		\varepsilon_{abcd} \bigl(  R^{ab} \wedge R^{cd} 
		+ 2k R^{ab} \wedge \xi^{cd} 
		+ k^2 \xi^{ab} \wedge \xi^{cd}
		\bigr) \\
		& \qquad \quad + \dfrac{1}{8 \pi^2} \bigl( (r+y) (R^{ab} \wedge R_{ab} + 2k  R^{ab} \wedge \xi_{ab}) - \dfrac{2rk}{\ell^2} T^a \wedge T_a
		\bigr)
		\Bigr) \\
		%%%%%%%%%%%%%%%%
		&= \int_{\mathcal{M}} \Bigl(
		\dfrac{e }{2(4\pi)^2} \bigl( 
		\varepsilon_{abcd} R^{ab} \wedge R^{cd} + \dfrac{4k}{\ell^2}  \sqrt{|\tilde{g}|} \Rmod[\mu\nu]{ab} \betainvmod[\mu]{a} \betainvmod[\nu]{b} d^4x + \dfrac{24k^2}{\ell^4} \sqrt{|\tilde{g}|} d^4x
		\bigr) \\
		& \qquad \quad + \dfrac{1}{8 \pi^2} \bigl( (r+y) (R^{ab} \wedge R_{ab} + \dfrac{k}{\ell^2} \sqrt{|\tilde{g}|} \Rmod[\mu\nu]{ab} \betainvmod[\mu]{c} \betainvmod[\nu]{d} \varepsilon^{cd}_{~~ab} d^4x) - \dfrac{2rk}{\ell^2} T^a \wedge T_a
		\bigr)
		\Bigr) \numberthis
	\end{align*}
	
	Let $\gamma^i$, $i=0,1,2,3$ be the Dirac gamma matrices.
	We define a nondegenerate sesquilinear form on $\mathbb{C}^{4}$ : $(\psi,\chi) = \psi^\dagger \gamma^0 \chi = \widebar{\psi} \chi$.
	This form allows us to define a symmetric metric on $\mathbb{C}^{4}$ as $h^\varepsilon(\psi,\chi) = \dfrac{1}{2} \bigl((\psi,\chi) + (\chi,\psi) \bigl)$. 

Using the isomorphism $\Phi_\varepsilon : \mathfrak{so}(n-1,1)\rightarrow \mathfrak{spin}(n-1,1)$ between the Lorentz Lie algebra $\mathfrak{so}(n-1,1)$ and $\mathfrak{spin}(n-1,1)$, \cite[p.192]{gockeler_differential_1987}, whose action on the Lie algebra generators of $\mathfrak{so}(n-1,1)$ is given by:
    \begin{align}
    \label{isomorphism}
		\Phi_\varepsilon (J_{rs}) 
        = \dfrac{1}{4} \bar{e}_a J_{rs} e_b \gamma^a \gamma^b 
        = \dfrac{1}{4} (J_{rs})_{ab} \gamma^a \gamma^b 
        =  \dfrac{1}{2} \eta_{ar} \eta_{bs} \gamma^a \gamma^b \ ,
	\end{align}
 the covariant derivative associated to $\varpih$ on the space of spinors reads $D_\h
	= d + \Phi_\varepsilon (\varpih) 
	= (\partial_\mu + \dfrac{1}{4} \varpihmod[]{ab,\mu}\gamma^a \gamma^b) dx^\mu$.
	One can then construct the following matter action:\footnote{Since the metrics at the level of the Lie algebra and on $\mathcal{M}$ are always defined up to the factors $\kappa$ and $\zeta$, we set $\kappa^2\zeta = \tfrac{\ell^4}{216k^2}$ such that the overall factor is arranged to stick with the Dirac action given in \cite[p.197]{gockeler_differential_1987}.}
    \begin{align}
		\label{Matter action}
		\hskip -3mm
        S_M[A,\beta] & = \int_{\mathcal{M}} h^\varepsilon\bigl( i \psi, h_{\m,ab} \gamma^a \varpim^b \wedge *D_\h\psi\bigl) 
        =\int_{\mathcal{M}} \dfrac{36k^2\kappa\zeta^2}{\ell^4} h^\varepsilon\bigl( i \psi, \eta_{ab} \gamma^a \beta^b \wedge \tilde{*}D_\h\psi\bigl)
        = \mathrm{Re} S_D
	\end{align}
	where $\mathrm{Re} S_D$ corresponds to the real part of the Dirac action as described in \cite[p.197]{gockeler_differential_1987} and $\tilde{*}$ is the Hodge star operator defined from $\tg$.

    \bigskip
	We are now in position to identify the different terms of $S_G[A,\beta]$ constructed above from the invariant polynomials and the Pontryagin characteristic class with the action presented in \cite{rezende_4d_2009}\footnote{This reference is preferably chosen due to the freedom of the parameters whereas the action given in \cite{Freidel:2005ak}[eqs (23-25)] is already more constrained.} which, in our notation, reads:
	\begin{align}
		\label{action Perez}
		S'[\varpi] &= \int_{\mathcal{M}} \Biggl( \biggl(
		\overbrace{ \alpha_1 \ell^2 \underbrace{R^{ab} \wedge \star \xi_{ab}}_{Palatini} + \alpha_2 \ell^2 R^{ab} \wedge \xi_{ab} }^{Holst}
		+ \overbrace{ \alpha_3 R^{ab} \wedge R_{ab} }^{Pontrjagin} \\
		& + \overbrace{ \alpha_4 R^{ab} \wedge \star R_{ab} }^{Euler}
		+ \overbrace{\alpha_5 \bigl( T^a \wedge T_a - \ell^2R^{ab} \wedge \xi_{ab}  \bigr)}^{Nieh-Yan}
		+ \overbrace{\alpha_6 \ell^4 \varepsilon_{abcd} \xi^{ab} \wedge \xi^{cd} }^{Cosmological~constant} 
		\biggr)
		\Biggr) \nonumber
	\end{align}
	where the $\star$ is an internal Hodge star  such that $ \star \beta^{ab} = \dfrac{1}{2} \varepsilon^{ab}_{~~kl}\beta^{kl} $ as in \cite[p.2]{montesinos_self-dual_2001}.
	
	\smallskip
	For $G=ISO(3,1)$ ($k=0$) the action simplifies to the topological terms:
	\begin{align}
		S_G[\varpi] &= \int_{\mathcal{M}} \Bigl(
		\overbrace{\dfrac{e }{2(4\pi)^2} \varepsilon_{abcd} R^{ab} \wedge R^{cd}}^{Euler}
		+ \overbrace{\dfrac{r+y}{8 \pi^2} R^{ab} \wedge R_{ab}}^{Pontrjagin}
		\Bigr) .
		\label{eq:action-ISO}
	\end{align}
	In this case, from the first Bianchi identity $D\bOmh = 0$, and by neglecting boundary terms, the equations of motion for $A$ and $\beta$ require respectively a null spin density and a null energy-momentum tensor as source terms and thus, turn out to be physically irrelevant.
	
	While for general mutation $\mu$ with parameters $k,\ell$, (recall the particular cases $G'=SO(4,1), SO(3,2)$ for $k=-\epsilon= -1,+1$, respectively) we compare $S'$ \eqref{action Perez} and $S_G$ \eqref{gauge action} and identify the 6 coupling constants present in $S'$ via the system:
	\begin{equation}
		\begin{cases}
			\label{system}
			\dfrac{e k}{8\pi^2 \ell^2}  = \alpha_1
			\\
			\dfrac{(r+y) k }{4\pi^2 \ell^2} = (\alpha_2 - \alpha_5)
			\\
			\dfrac{r +y}{8\pi^2} = \alpha_3 
			\\
			\dfrac{e }{(4\pi)^2} = \alpha_4
			\\
			- \dfrac{r k }{4\pi^2 \ell^2} = \alpha_5  
			\\
			\dfrac{e k^2}{32\pi^2\ell^4}  = \alpha_6 
		\end{cases}
		\Leftrightarrow
		\begin{cases}
			\alpha_1 = \dfrac{2k}{\ell^2} \alpha_4 = \dfrac{4\ell^2}{k} \alpha_6  \\
			\alpha_2 - \alpha_5 = \dfrac{2k}{\ell^2} \alpha_3 = - (1+ \dfrac{y}{r}) \alpha_5 \\
			\alpha_1 = \dfrac{e k}{(r+y)\ell^2} \alpha_3 \\
			r +y= \dfrac{4\pi^2\ell^2}{k} (\alpha_2 -\alpha_5)
		\end{cases}
	\end{equation}
	
	The solution of \eqref{system} in terms of the parameters $e$, $r$, $k$, and $\ell$ is: 
	\begin{equation}
		\begin{cases}
			\label{solution 1}
			\alpha_1 = \dfrac{2k}{\ell^2} \alpha_4 = \dfrac{4\ell^2}{k} \alpha_6  \\
			\alpha_2 = \dfrac{ky}{4\pi^2 \ell^2} =\dfrac{2k y}{(r+y)\ell^2} \alpha_3 = - \dfrac{y}{r} \alpha_5 \\
			\alpha_1 = \dfrac{e}{2y} \alpha_2
		\end{cases}
	\end{equation}
	and the resulting deformed topological gauge action reads:
	\begin{align}
		S_G[\varpi] &= r \mathdutchcal{P}(\bOm) + \int_\mathcal{M} \bigl(e  \Pf (\dfrac{\bOmh}{2\pi}) 
        - \dfrac{y}{8\pi^2} \Tr (\bOmh\wedge\bOmh) \bigr) \label{topological action} \\ 
		&= \int_{\mathcal{M}} \Bigl(
		\overbrace{ \dfrac{k }{4\pi^2\ell^2} \big( \dfrac{e}{4} \underbrace{R^{ab} \wedge \beta^c\wedge\beta^d \varepsilon_{abcd}}_{Palatini} + y R^{ab} \wedge \beta_a\wedge\beta_b \big)}^{Holst}
		+ \overbrace{\dfrac{e k^2}{32\pi^2\ell^4}  \beta^a\wedge\beta^b \wedge \beta^c\wedge\beta^d \varepsilon_{abcd}}^{Bare~Cosmological~constant} \nonumber
		\\
		%%%%%%%%
		& \qquad \quad + \overbrace{\dfrac{r +y}{8 \pi^2} R^{ab} \wedge R_{ab}}^{Pontrjagin}
		+ \overbrace{\dfrac{e }{2(4\pi)^2} R^{ab} \wedge R^{cd} \varepsilon_{abcd}}^{Euler} 
		\overbrace{- \dfrac{r k }{4 \pi^2 \ell^2} (T^a \wedge T_a - R^{ab} \wedge \beta_a\wedge\beta_b)}^{Nieh-Yan}
		\Bigr) \nonumber
	\end{align}
As already noticed in \cite{rezende_4d_2009}, there are two contributions to the Barbero-Immirzi parameter $\gamma$, which is expressed here as $\gamma =\dfrac{2\alpha_1}{2\alpha_2-\alpha_5} = \dfrac{e}{r+2y}$. One in front of $R^{ab} \wedge \beta_a\wedge\beta_b$ in the Holst term and the other one in the Nieh-Yan topological term.  

	The equations of motion yield (neglecting boundary terms):
	\begin{align}
		\dfrac{\delta \mathcal{L}_G [A,\beta]}{\delta \beta^c}
		%%%%%%%
		& = \dfrac{ek}{8\pi^2\ell^2} ( R^{ab} + \dfrac{ k }{\ell^2} \beta^a \wedge \beta^b ) \wedge \beta^d \varepsilon_{abcd}
		+ \dfrac{yk}{2\pi^2\ell^2} R^{ab} \wedge \beta_b \eta_{ac}
		%%%%%%%%
		= \tau_c = - \dfrac{\delta \mathcal{L}_M [A,\beta]}{\delta \beta^c} \label{eom frame Lorentz}  \\
		%%%%%%%%%%%%%%%%%%%%%%%%%%%%%%%%%%%%%%%%%%%%%%%%%%%%%%%%%
		\dfrac{\delta \mathcal{L}_G [A,\beta]}{\delta A^{ab}}
		%%%%%%%%
		& = 
		\dfrac{e k}{8\pi^2 \ell^2}T^c \wedge \beta^d \varepsilon_{abcd} 
		+ \dfrac{yk}{2\pi^2\ell^2} T_a \wedge \beta_b
		%%%%%%%%%
		= \dfrac{1}{2} \mathfrak{s}_{ab} = - \dfrac{\delta \mathcal{L}_M [A,\beta]}{\delta A^{ab}} \label{eom spin Lorentz}
	\end{align}
	where $\tau$ and $\mathfrak{s}$ are respectively the energy-momentum tensor and the spin density of matter.
	
	\smallskip
	For $y=0$ and $\alpha_1 = - \dfrac{1}{16 \pi G}$, one identifies\footnote{At this stage a remark is in order. Indeed, the $[\m,\m]$ bracket occurring in the $\h$-part of the curvature, see \eqref{eq:mm} which is related to the mutation parameters $k,\ell$ turns out to be related to the bare cosmological constant $\Lambda_0$. This gives a physical interpretation of the mutation. See Section~\ref{Topological total action} where the topological features of the theory will be linked to the vanishing of the bare cosmological constant $\Lambda_0$.} $ \ell^2 = - \dfrac{3 k}{ \Lambda_0 }$ and $e = \dfrac{8 \pi^2 \alpha_1 \ell^2}{k} = \dfrac{3\pi}{2\Lambda_0 G} $ such that applying the Hodge star $*$ to the first equation \eqref{eom frame Lorentz} gives the Einstein equations with bare cosmological constant $\Lambda_0$ while \eqref{eom spin Lorentz} reduces to the equation relating spin-density to torsion of Einstein-Cartan gravity:
	\begin{align}
    \label{eom lorentz 1}
		G_{kc} + \Lambda_0 \eta_{kc} & = - \dfrac{1}{2\alpha_1} \tau_{ck} \\
		T^c \wedge \beta^d \varepsilon_{abcd} & = \dfrac{1}{2\alpha_1} \mathfrak{s}_{ab}\ \label{eom lorentz 2}.
	\end{align}
    One may notice that the previous identification allows to recover the value (up to the mutation parameter $k$ and sign convention) $ e/(32\pi^2) = -\ell^2/(64\pi Gk)$ of the parameter $\alpha$ of \cite[Sec II]{Miskovic:2009bm} in front of the Euler density built out of $R$, leading to finite Noether charges and a regularized Euclidean action as mentioned in \cite{Aros:1999id,Miskovic:2009bm}.

    \medskip
	Separating the energy-momentum tensor $\tau_{ck}$ as $\tau_{ck}=\tau_{M,ck} -  \rho_{\text{vac}} \eta_{ck}$ with $\tau_{\text{vac},ck}=-\rho_{\text{vac}}\eta_{ck}$ \cite{carroll_cosmological_2001} the part related to the vacuum energy density $\rho_{\text{vac}}$ leads to:
	\begin{align}
		G_{kc} + \Lambda \eta_{kc} & = - \dfrac{1}{2\alpha_1} \tau_{M,ck} \\
		T^c \wedge \beta^d \varepsilon_{abcd} & = \dfrac{1}{2\alpha_1} \mathfrak{s}_{ab}
	\end{align}
	where $\Lambda = \Lambda_0 + \Lambda_\text{vac} = \overbrace{- \dfrac{3k}{\ell^2}}^{\Lambda_0} \overbrace{-\dfrac{\rho_{\text{vac}}}{2\alpha_1}}^{\Lambda_\text{vac}} = -\dfrac{3k}{\ell^2} - \dfrac{4\pi^2\ell^2\rho_{\text{vac}}}{ek} = \Lambda_0 + 8 \pi G \rho_{\text{vac}} $ is the effective cosmological constant, and $\Lambda_\text{vac}$ is the vacuum contribution.
	
	Let $\Lambda_{exp}$ be the measured value of the effective cosmological constant $\Lambda$. Several situations may occur depending on the theoretical predictions for $\rho_\text{vac}$.
	
	\begin{itemize}
		\item    
		If $\Lambda_\text{vac}>\Lambda_{exp}$, then, agreement with experiment could be attained by fixing the appropriate value of $\Lambda_0=-\dfrac{3k}{\ell^2}$ with $k>0$, in particular, for $k=1$ $\Leftrightarrow G=SO(3,2)$ (AdS spacetime), thus potentially allowing the use of AdS/CFT correspondence \cite{aharony_n_2008}.
		\item
		If $\Lambda_{\text{vac}}<\Lambda_{exp}$, agreement with $\Lambda_{exp}$ would require a mutation parameter $k<0$. A particular case is $k = - 1$ $\Leftrightarrow G=SO(4,1)$ (dS spacetime). Especially,  $\rho_{\text{vac}}=0$ whenever one imposes normal ordering as mentioned in \cite{copeland2006dynamics}.
		\item
		The last case is $\Lambda_{\text{vac}}=\Lambda_{exp}$ this either implies $k=0 \Leftrightarrow G=ISO(3,1)$, which corresponds to Minkowski spacetime or $\Lambda_0=-\dfrac{3k}{\ell^2} \to 0$. 
        In this limit, the contribution \eqref{eq:mm} to the $\h$-part of the Cartan curvature becomes small and the theory can be considered as topological. This point will be discussed in detail in Section~\ref{Topological total action}.
    
        However, for $k=0$ we do not retrieve Einstein's equations since the action consists solely of the total derivatives corresponding to both the Euler and Pontrjagin densities of the curvature $R$.
		
	\end{itemize}
	
	\smallskip
	\begin{remark}
		The equations of motion are the same as those of Einstein-Cartan gravity modulo the boundary terms if $y=0$. Even if they do not affect the classical equations of motion, the Euler and Pontrjagin densities of $R$ as well as the Nieh-Yan term (which are not present in usual Einstein-Cartan gravity) could have non trivial effects in a path integral formulation of quantum gravity when summing over all possible configurations as stated in \cite[p.2]{rezende_4d_2009}. The Nieh-Yan term also gives a non trivial contribution to the Barbero-Immirzi parameter $\gamma = \dfrac{e}{r+2y}$ via the parameter $r$.
	\end{remark}

	Using the same relation as in \cite{rezende_4d_2009} we can identify the Barbero-Immirzi parameter $\gamma =\dfrac{2\alpha_1}{2\alpha_2-\alpha_5} = \dfrac{e}{r+2y}$, with our parameters. By replacing $e$, $y$ and $\ell$ in the action we obtain: 
	\begin{align*}
		\label{topological action mutation}
		S_G[\varpi] &= r \mathdutchcal{P}(\overline{\Om}) + \int_\mathcal{M} (\dfrac{3\pi}{2\Lambda_0 G} \Pf (\dfrac{\overline{\Om}_\h}{2\pi}) + (\dfrac{3\pi}{4\Lambda_0 G\gamma} -\dfrac{r}{2}) \det ( \mathds{1} + \dfrac{\bOmh}{2\pi})) \numberthis \\
        %%%%%%%%%%%%%%%%%%%%%%%%%%%%%%%%%%%%%%%%
		&= \int_{\mathcal{M}} \dfrac{\alpha_1}{2} \biggl(
		\overbrace{ R^{ab} \wedge \beta^c\wedge\beta^d \varepsilon_{abcd} + (\dfrac{2}{\gamma} - \dfrac{4 \Lambda_0 G r}{3\pi}) R^{ab} \wedge \beta_a\wedge\beta_b }^{Holst}
		\overbrace{-\dfrac{\Lambda_0}{6}  \beta^a \wedge \beta^b \wedge \beta^c \wedge \beta^d \varepsilon_{abcd}}^{Cosmological~constant} 
		\\
		& \qquad \qquad \overbrace{- \dfrac{3}{2\Lambda_0} R^{ab} \wedge R^{cd} \varepsilon_{abcd} }^{Euler}
		\overbrace{
        - (\dfrac{2Gr}{\pi} + \dfrac{3}{\Lambda_0 \gamma}) R^{ab} \wedge R_{ab} }^{Pontrjagin}
		\overbrace{-\dfrac{8\Lambda_0Gr}{3\pi} \bigl( T^a \wedge T_a - R^{ab} \wedge \beta_a \wedge \beta_b \bigr)}^{Nieh-Yan}
		\biggr) .
	\end{align*}
	Thus, by building an action from a linear combination of these invariant polynomials and the Pontrjagin number associated to a manifold with Lorentzian Cartan geometry we retrieve the Holst action with a bare cosmological constant $\Lambda_0$ as well as the Euler, Pontrjagin and Nieh-Yan densities of the curvature $R$ and torsion $T$. According to  \cite{date_topological_2009}, these three last terms should be present in any quantum theory of gravity.
    
	The way \eqref{topological action} is built from invariant polynomials in the curvature with mutated geometries may be of use to obtain a generalization of the action described in the context of qeGR in \citep{cattaneo2024gravity} that would incorporate all the topological terms appearing in \eqref{topological action} and \citep{rezende_4d_2009}.
 	The main difference compared to the literature is that in our case the coupling constants of the Euler and Pontrjagin densities are determined by the bare cosmological constant and by, respectively, the coupling constant of the Palatini action and the Barbero-Immirzi (B-I) parameter as well as $r$. This is due to how we constructed the gauge action from topological arguments. Moreover, notice that one may identify the coefficient $\alpha_3 = \dfrac{r+y}{8\pi^2}$ 
    related to the Pontrjagin density built from $R$ with the parameter $\beta$ of \cite[eq.(26)]{Miskovic:2009bm}. In this case one would then be able to identify (up to a sign convention) the last remaining physical parameter:
    \begin{align}
    r= - \dfrac{3\pi}{2\Lambda_0 G} (\dfrac{1}{\gamma}\pm 1) .
    \end{align} 

    These dependencies on $G, \gamma$, $\Lambda_0$ and $r$ are interesting since a quantum version of the theory may yield effects sensitive to the bare cosmological constant $\Lambda_0$ in particular.
	We believe that probing the bare cosmological constant in this way could be highly beneficial in constraining the vacuum contribution to the cosmological constant $\Lambda_\text{vac}$ since the two cases $\Lambda_\text{vac}=0$ and $\Lambda = \Lambda_\text{vac}=8 \pi G \rho_{\text{vac}}$ bring about huge differences in the numerical value of $\Lambda_0$. 
 
	\smallskip
    One can also see that the coupling constant of the Euler density $-\dfrac{3\alpha_1}{4\Lambda_0} R^{ab} \wedge R^{cd} \varepsilon_{abcd}$ obtained via the equation of motion \eqref{eom frame Lorentz} for the tetrad $\beta$ matches perfectly with what has been described in \cite[section III]{alexander_zero-parameter_2019} such that the gauge action $S_G$ \eqref{topological action mutation} is invariant under the "duality symmetry" consisting of the exchange $R^{ab} \leftrightarrow \dfrac{\Lambda_0}{3} \beta^a \wedge \beta^b$. It is worthwhile to notice that since by construction $S_G$ is quadratic in the curvature, $S_G$ turns out to be invariant under the (more general) duality symmetry $\bOm \leftrightarrow - \bOm$ which splits as: 
    \begin{equation}
		\begin{cases}
			\label{solution 2}
			\bOmh \leftrightarrow - \bOmh \Rightarrow R \leftrightarrow \dfrac{\Lambda_0}{3} \beta \wedge \bar{\beta} \\
			\bOmm \leftrightarrow - \bOmm \Rightarrow T/\ell \leftrightarrow - T/\ell\ .
		\end{cases}
	\end{equation}
   and reduces to the case of \cite{alexander_zero-parameter_2019} upon solely considering the $\h$-sector of the curvature. Remark that $S_G$ is invariant under each of these transformations independently.

   \smallskip
   In addition, a direct comparison with the two perturbative expansion parameters $\alpha$,$\beta$ given in \cite{Freidel:2005ak} yields:
    \begin{align*}
        \alpha & = \dfrac{G \Lambda_0}{3(1-\gamma^2)} = -\dfrac{r+2y}{e} \beta = -\dfrac{\pi}{2e} \dfrac{(r+2y)^2}{(r+2y)^2-e^2} \\
        \beta & = \dfrac{\gamma G \Lambda_0}{3(1-\gamma^2)} = \dfrac{\pi}{2} \dfrac{r+2y}{(r+2y)^2-e^2}.
    \end{align*}
	As for $k$, the case of a dynamical $\ell$ is studied in \cite{Thibaut:2025lwx}.

\smallskip
	We can now consider $ e = \dfrac{8 \pi^2 \alpha_1 \ell^2}{k} = \dfrac{3\pi}{2\Lambda_0 G} $ and $ \ell^2 = - \dfrac{3 k}{ \Lambda_0 } $ as fixed.
	Upon identifying the B-I parameter as $\gamma = \dfrac{e}{r+2y}$, only two degrees of freedom corresponding to the relative value of the parameters $r$ and $y$ and to $|k|$ (the absolute value of $k$) are left.
	
	\smallskip
	\begin{remark}
		The Pontrjagin density and other terms described in \eqref{topological action mutation} correspond to the Chern-Simons modified gravity \cite{alexander_chern-simons_2009} terms with a constant scalar field $\vartheta = -\big(\dfrac{2Gr}{\pi} + \dfrac{3}{\Lambda_0 \gamma}\big) \dfrac{\alpha_1}{2}$ and potential term for $\vartheta$, ${\displaystyle S_{\vartheta} = \int_{\mathcal{M}} V(\vartheta)}$, with:  
		\begin{align}
			V(\vartheta) &=  
			(\dfrac{1}{\gamma} - \dfrac{ 2\Lambda_0 G r}{3\pi}) \alpha_1 R^{ab} \wedge \beta_a\wedge\beta_b 
			\overbrace{- \dfrac{\Lambda_0\alpha_1}{12} \varepsilon_{abcd} \beta^a \wedge \beta^b \wedge \beta^c \wedge \beta^d }^{Cosmological~constant} \\
			&\qquad
			\overbrace{- \dfrac{3\alpha_1}{4\Lambda_0} R^{ab} \wedge R^{cd} \varepsilon_{abcd} }^{Euler}
			\overbrace{-\dfrac{4\Lambda_0G\alpha_1r}{3\pi} \bigl( T^a \wedge T_a - R^{ab} \wedge \beta_a \wedge \beta_b \bigr)}^{Nieh-Yan}
			. \nonumber
		\end{align}
		The Chern-Simons term also occurs in Loop Quantum Gravity (LQG) and String theory as mentionned in \cite{alexander_chern-simons_2009}. The fact that $\vartheta$ is a constant in our formalism leads to the speed of gravitational waves being the same as the speed of light as discussed in \cite{diaz2023gauss,daniel2024gravitational} and is consistent with the gravitational wave GW170817 event.
		
	\end{remark}
	
	Let us collect some of the results in the table below:
    
	\begin{align} \label{eq:table} 
		\left\{
		\begin{array}{ll}
			k = - 1  &\Leftrightarrow G = SO(4,1) \text{ for } \Lambda_0 > 0 \\[2mm]
			\left\{
			\begin{array}{ll}
				k
				= \pm 1 \Leftrightarrow G=SO(4,1),SO(3,2) \\
				\ell \rightarrow \infty \\
				r=-2y=0
			\end{array}
			\right\}
			\Leftrightarrow
			\left\{
			\begin{array}{ll}
				\Lambda_0 \rightarrow 0 \\
				\gamma \rightarrow \infty
			\end{array}
			\right\}  &\simeq \text{Einstein-Cartan} \ \\
			k = 1  &\Leftrightarrow G = SO(3,2) \text{ for } \Lambda_0 < 0
		\end{array}
		\right.
	\end{align}

Taking $ry \neq 0$ and $r \neq - 2y$ in the action \eqref{topological action} secures the presence of the Barbero-Immirzi part of the Holst action, together with the Pontrjagin density of $R$, the Nieh-Yan term and a non diverging Barbero-Immirzi parameter.
	
	\section{Invariant polynomials and action for the Möbius group $G= SO(4,2)/\{\pm I\}$}
		\label{section mobius}
	
	Another interesting example of geometry is given by the Möbius group $G=SO(4,2)/\{\pm I\}$ defined by the relation $g^T N g = N$, $g \in G$, with 
	\begin{align*}
		N = \begin{pmatrix}
			0 & 0 & -1 \\
			0 & \eta & 0 \\
			-1 & 0 & 0
		\end{pmatrix} 
	\end{align*}  
	and $K= \overbrace{CO(3,1)}^H \ltimes \mathbb{R}^{3,1*} $ the maximal normal subgroup of $G$. A group element of K admits the matrix representation: 
		\begin{align*}
		K \ni \overbrace{\begin{pmatrix}
				\zeta & 0 & 0 \\
				0 & \Lambda & 0 \\
				0 & 0 & \zeta^{-1}
		\end{pmatrix} }^{\in H = CO(3,1) }
		\overbrace{\begin{pmatrix}
				1 & \bar{u} & \tfrac{1}{2} \bar{u}u \\
				0 & \mathds{1} & u \\
				0 & 0 & 1
		\end{pmatrix}}^{\in \mathbb{R}^{3,1*}}
		\text{ where } \begin{array}{ll}
        \Lambda \in SO(3,1), \Lambda^{\text{T}} \eta \Lambda =\eta \; ; \; \zeta\Lambda \in CO(3,1), \\
    u\in \mathbb{R}^{3,1} \; ; \; \bar{u} =  u^{\text{T}}\eta \in {\mathbb{R}^{3,1}}^* \text{ and } \zeta \in \mathbb{R}^+
        \end{array}
	\end{align*}
    where $ \mathbb{R}^+$ denotes the multiplicative group of positive real numbers.
    
	The Lie algebra of $G$ is $\g = \mathfrak{so}(4,2) = \g_{-1} \oplus \g_0 \oplus \g_{1} $ with $\g_{-1} \simeq \mathbb{R}^{3,1} $, $\g_0 \simeq \mathfrak{co}(3,1) = \mathfrak{so}(3,1) \oplus \mathbb{R}$ and $ \g_1 \simeq \mathbb{R}^{3,1*}$. The Lie algebra $\g$ turns out to be a graded Lie algebra \cite{kobayashi_transformation_1995}, namely,
	\begin{align}
		& [ \g_0 , \g_0 ] \subset \g_0, \qquad [ \g_0 , \g_{-1} ] \subset \g_{-1},  \qquad [ \g_1 , \g_0 ] \subset \g_1, \qquad [\g_{-1},\g_1] \subset \g_0 \notag \\[-4mm]
		& \label{eq:gradedLa} \\
		& \qquad \qquad \qquad \qquad \qquad \; \; [ \g_{-1} , \g_{-1} ] = [ \g_{1} , \g_{1} ] = 0. \notag
	\end{align} 
	The generators of the above graded algebra are the same as the ones described previously in \eqref{generators} except for $\g_0 \simeq \mathfrak{co}(3,1)$ which is generated by the $ \{J_{ab} \} $'s and $\mathds{1}_4$.
	An element of $\mathfrak{co}(3,1)\simeq \g_0$ is $\gamma_{\mathfrak{co}} = \frac{1}{2}\gamma_{\mathfrak{so}}^{ab}J_{ab} - z \mathds{1}_4 $ with $\gamma_{\mathfrak{so}} \in \mathfrak{so}(3,,1)$ and $z\in \mathbb{R}$.
	
	\subsection{The Möbius Lie algebra as a symmetric Lie algebra \label{subsect:Moebius}} 

Let us consider $H=CO(3,1)$ as structure group for the Cartan geometry to be discussed in this part. This implies a geometry that differs from the whole isotropic Möbius group as introduced in \cite[Lemma 1.8,p.269]{sharpe2000differential}). It rather corresponds to $SO(3,1)\times \mathbb{R}^+$ (Lorentz$\times$Weyl) the isotropic part of the Weyl model which can be embedded into the Möbius model, see \cite[middle p.277]{sharpe2000differential}.

	The following decomposition of the graded Lie algebra
    \begin{align} \label{eq:graded-equiv}
    \g= \overbrace{\g_0}^{\h}  \oplus \overbrace{\g_{1} \oplus \g_{-1}}^{\m} \simeq \overbrace{\mathfrak{co}(3,1)}^{\mathfrak{so}(3,1) \oplus \mathbb{R}} \oplus {\mathbb{R}^{3,1}}^* \oplus \mathbb{R}^{3,1}
    \end{align}
    of the Möbius group is reductive with respect to $H$ and is also a symmetric Lie algebra $\g = \mathfrak{co}(3,1) \oplus \m$  where $\m= \g_{1} \oplus \g_{-1} \simeq {\mathbb{R}^{3,1}}^* \oplus \mathbb{R}^{3,1}$. The commutation relations (see \eqref{eq:gradedLa} and \cite[Example 5.2 and Chapter XI]{kobayashi_foundations_1996}) are given by
	\begin{align}
		[ \g_0 , \g_0 ] \subset \g_0, \qquad [ \g_0 , \m ] \subset \m,  \qquad [ \m,\m] \subset \g_0\ 
	\end{align} 
    making $\g=\g_0\oplus \m$ a symmetric Lie algebra.
    
According to the graded Lie algebra decomposition of $\g$, let  
    \[
    \varphi_\g = \varphi_1\oplus \varphi_0\oplus \varphi_{-1} \in \g= \g_{1}\, \oplus\, \g_0\, \oplus\, \g_{-1}
    \] 
for which we adopt the matrix presentation\footnote{Accordingly, the Killing form on $\g = \h\oplus\m$ is computed to be 
   \[
   K_\g(\varphi,\varphi') = \Tr(\ad_{\varphi}\circ \ad_{\varphi'}) = n \Tr(cc')+ 2n\big( zz'+\eta(b,a')+\eta(a,b')\big)
   \]
   where $n$ corresponds to the one coming from $\dim \mathfrak{so}(n-1,1) = n(n-1)/2$.}
	\begin{align}\label{eq:graded-decomp}
		\varphi_\g = \varphi_1\oplus \varphi_0\oplus \varphi_{-1} = \begin{pmatrix}
			z & \bar{a}	& 0 \\
			b & c & a \\
			0 & \bar{b} & -z
		\end{pmatrix}
		=
		\begin{pmatrix}
			0 & \bar{a} & 0 \\
			0 & 0 & a \\
			0 & 0 & 0
		\end{pmatrix}
		+
		\begin{pmatrix}
			z & 0 & 0 \\
			0 & c & 0 \\
			0 & 0 & -z
		\end{pmatrix}
		+
		\begin{pmatrix}
			0 & 0 & 0 \\
			b & 0 & 0 \\
			0 & \bar{b} & 0
		\end{pmatrix}
	\end{align}
with $a,b \in \mathbb{R}^{3,1}$, $c \in \mathfrak{so}(3,1)$, $z \in \mathbb{R}$ and $\bar{a}= a^T \eta, \bar{b}= b^T \eta \in \mathbb{R}^{*3,1}$. It is worthwhile to notice that $E = \begin{pmatrix}
    1 & 0 & 0\\ 0 & 0 & 0 \\ 0 & 0 & -1
\end{pmatrix}\in \g_0$ the generator for dilation defines the grading of $\g$ by $\g_j = \{\varphi\in\g \text{ s.t. } [E,\varphi] = j \varphi \}, j=0,\pm 1$.
Moreover, through the isomorphism $\g\simeq {\mathbb{R}^{3,1}}^* \oplus \mathfrak{co}(3,1) \oplus \mathbb{R}^{3,1}$, one can also write \cite{kobayashi_transformation_1995}
\[
\varphi_\g = \varphi_1\oplus \varphi_{\mathfrak{co}} \oplus \varphi_{-1} = \varphi_1\oplus (\varphi_{\mathfrak{so}} - \varphi_{\mathbb{R}}\mathds{1}_4) \oplus \varphi_{-1} =\bar{a} \oplus (c - z \mathds{1}_4) \oplus b.
\]

    Let us now consider the mutation\footnote{See Definition~\ref{defmutation} given above.} $\g'=\mu(\g)=\mu(\h\oplus\m) = \h\oplus\mu(\m)=\h\oplus\m'$ given by:
\begin{align}
		\label{mutation Möbius}
		\varphi_\g =
		\begin{pmatrix}
            z & \bar{a} & 0 \\
			b  & c	& a \\
			0 & \bar{b}  & -z 
		\end{pmatrix}
		\mapsto
		\varphi_{\g'} = \mu(\varphi_\g)=
		\begin{pmatrix}
			z & \bar{\gamma}_2 \bar{a} & 0 \\
			\gamma_1 b   & c & \gamma_2 a \\
			0 & \bar{\gamma}_1 \bar{b}  & -z 
		\end{pmatrix} \in \g'
\end{align}
where the scalar mutation parameters $\gamma_1,\bar{\gamma}_1\gamma_2,\bar{\gamma}_2$ are arranged such that $\mu$ is a linear isomorphism, in particular, $\gamma_1 \bar{\gamma}_2= \bar{\gamma}_1\gamma_2$. Thus, it is merely a change of scale within the symmetric Lie algebra which preserves the splitting $\m=\g_{-1}\oplus\g_1$ according to the graded structure.

\smallskip
Let us choose $\gamma_1=\bar{\gamma}_1=\dfrac{1}{\ell}$, $\gamma_2=\bar{\gamma}_2 = \dfrac{1}{\ell'}$ (where $\ell$ and $\ell'$ are non-null real numbers)\footnote{In that case, the Killing form on $\g'=\h\oplus\m'$ can be easily computed by performing the scale transformations $b\mapsto \gamma_1 b$ and $a \mapsto \gamma_2 a$ into the expression for $K_\g$, so that
\[
K_{\g'}(\varphi,\varphi') = n \Tr(cc')+ 2n\Bigl( zz'+ \gamma_1\gamma_2\bigl(\eta(b,a')+\eta(a,b')\bigr) \Bigr).
\]}.

    The Lie bracket on $\g'$ is thus given by the commutator:
	\begin{align*}
		[\varphi,\varphi'] & =
		%%%%%%%%%%%%%%%
		\begin{bmatrix}
			\begin{pmatrix}
				z & \dfrac{1}{\ell'} \bar{a}	& 0 \\
				\dfrac{1}{\ell} b & c & \dfrac{1}{\ell'} a \\
				0 & \dfrac{1}{\ell} \bar{b} & -z
			\end{pmatrix}
			, %%%%%%%%%%%%%%%
			\begin{pmatrix}
				z' & \dfrac{1}{\ell'} \bar{a}'	& 0 \\
				\dfrac{1}{\ell} b' & c' & \dfrac{1}{\ell'} a' \\
				0 & \dfrac{1}{\ell} \bar{b}' & -z'
			\end{pmatrix}
		\end{bmatrix} \numberthis \\
		%%%%%%%%%%%%%%%%%%%%%
		& =
		\begin{pmatrix}
			\dfrac{1}{\ell\ell'} (\bar{a} b' - \bar{a}' b)  & \dfrac{1}{\ell'} (z\bar{a}' - z' \bar{a}	+ \bar{a} c' - \bar{a}' c) & \dfrac{1}{\ell'^2} (\bar{a} a' - \bar{a}' a) \\
			%%%%%%%%%%%%%%%%%%%%
			\dfrac{1}{\ell} (z' b - z b' + c b' - c' b) & [c,c'] + \dfrac{1}{\ell\ell'} (b \bar{a}' - b' \bar{a} + a \bar{b}' - a' \bar{b}) & \dfrac{1}{\ell'} (c a' - c' a - z' a + z a') \\
			%%%%%%%%%%%%%%%%%%%%
			\dfrac{1}{\ell^2} (\bar{b} b' - \bar{b}' b) & \dfrac{1}{\ell} (\bar{b} c' - \bar{b}' c - z \bar{b}' + z' \bar{b}) & \dfrac{1}{\ell\ell'} (\bar{b} a' - \bar{b}' a)
		\end{pmatrix} \\
		& = 
		\begin{pmatrix}
			\dfrac{1}{\ell\ell'} (\bar{a} b' - \bar{a}' b)  & \dfrac{1}{\ell'} (z\bar{a}' - z' \bar{a}	+ \bar{a} c' - \bar{a}' c) & 0 \\
			%%%%%%%%%%%%%%%%%%%%
			\dfrac{1}{\ell} (z' b - z b' + c b' - c' b) & [c,c'] + \dfrac{1}{\ell\ell'} (b \bar{a}' - b' \bar{a} + a \bar{b}' - a' \bar{b}) & \dfrac{1}{\ell'} (c a' - c' a - z' a + z a') \\
			%%%%%%%%%%%%%%%%%%%%
			0 & \dfrac{1}{\ell} (\bar{b} c' - \bar{b}' c - z \bar{b}' + z' \bar{b}) & \dfrac{1}{\ell\ell'} (\bar{b} a' - \bar{b}' a)
		\end{pmatrix}
	\end{align*}
	
	A Cartan connection $\varpi$ with this $G/H$ Cartan geometry can accordingly be decomposed as:
	\begin{align}
		\varpi = \varpih \oplus \varpim = \varpi_0 \oplus \overbrace{\varpi_{1}\oplus \varpi_{-1}}^{\varpim}
		=& \begin{pmatrix}
			\lambda & 0 & 0 \\
			0 & A & 0 \\
			0 & 0 & - \lambda
		\end{pmatrix}
		+
        \varpi_{1}^a
		\begin{pmatrix}
			0 & \dfrac{1}{\ell'} \bar{e}_a & 0 \\
			0 & 0 & \dfrac{1}{\ell'} e_a \\
			0 & 0 & 0
		\end{pmatrix}
		+
        \varpi_{-1}^a
		\begin{pmatrix}
			0 & 0 & 0 \\
			\dfrac{1}{\ell} e_a & 0 & 0 \\
			0 & \dfrac{1}{\ell} \bar{e}_a & 0
		\end{pmatrix} \nonumber \\
		=&
		\begin{pmatrix}
			\lambda & \dfrac{1}{\ell'} \bar{\alpha} & 0 \\
			\dfrac{1}{\ell} \beta & A & \dfrac{1}{\ell'} \alpha \\
			0 & \dfrac{1}{\ell} \bar{\beta} & - \lambda
		\end{pmatrix}
	\end{align}
	where $\varpim$ is the soldering form, $A= \dfrac{1}{2} A^{ab} J_{ab} =\dfrac{1}{2} A^{ab}_\mu dx^\mu \otimes J_{ab} $ corresponds to the spin connection and 	$\beta = \beta^a e_a = \beta^a_\mu dx^\mu \otimes e_a$ is the tetrad used to define the metric of the tetrad formalism $\tilde{g}= \eta(\beta,\beta)$. 
	$\alpha = \alpha^a e_a = \alpha^a_\mu dx^\mu \otimes e_a$ can be interpreted as a secondary tetrad.
	Finally, $\lambda = \lambda_\mu dx^\mu$ is the $\mathbb{R}$-valued (dilations) part of the connection.
	Also, $\bar{\alpha} = \alpha^a \bar{e}_a = \alpha^a_\mu dx^\mu \otimes \bar{e}_a$ and $\bar{\beta} = \beta^a \bar{e}_a = \beta^a_\mu dx^\mu \otimes \bar{e}_a $.

\smallskip
    \begin{remark}
              The geometry modeled on the present $8$-dimensional quotient $G/H$ describes in general an $8$-dimensional manifold $\widetilde{\mathcal{M}}$ invariant under Lorentz transformations (elements in $SO(3,1)$) and dilations (elements of the multiplicative group $\mathbb{R}^+$) with soldering form $\varpim$. In the next section, we will see how we can obtain a description of a $4$-dimensional submanifold $\mathcal{M}$ of $\widetilde{\mathcal{M}}$ by imposing a condition on the soldering form (the $\m$-valued part of the Cartan connection).
    \end{remark}
 
	The curvature $\bOm = d\varpi + \dfrac{1}{2} [\varpi,\varpi] $ of $\varpi$ then reads:
	\begin{align}
		\bOm & = \bOmh \oplus \bOmm =
		\begin{pmatrix}
			f & \dfrac{1}{\ell'}\bar{\Pi} & 0 \\
			\dfrac{1}{\ell}\Theta & F & \dfrac{1}{\ell'}\Pi \\
			0 & \dfrac{1}{\ell}\bar{\Theta} & - f
		\end{pmatrix}
		=
		\begin{pmatrix}
			f & 0 & 0 \\
			0 & F & 0 \\
			0 & 0 & - f
		\end{pmatrix} 
		+
		\begin{pmatrix}
			0 & \dfrac{1}{\ell'}\bar{\Pi} & 0 \\
			\dfrac{1}{\ell}\Theta & 0 & \dfrac{1}{\ell'}\Pi  \\
			0 & \dfrac{1}{\ell}\bar{\Theta} & 0
		\end{pmatrix}
		\\[4mm]
		& =
		\begin{pmatrix}
			d\lambda +\dfrac{1}{\ell \ell'}  \bar{\alpha} \wedge \beta  
			& \dfrac{1}{\ell'} (d\bar{\alpha} +\lambda \wedge \bar{\alpha} + \bar{\alpha} \wedge A) & 0 \\
			\dfrac{1}{\ell} (d\beta - \lambda \wedge \beta + A \wedge \beta) & dA + \dfrac{1}{2} [A,A] + \dfrac{1}{\ell \ell'} (\beta \wedge \bar{\alpha} + \alpha \wedge \bar{\beta}) & \dfrac{1}{\ell'} (d \alpha + A \wedge \alpha + \lambda \wedge \alpha) \\
			0 & \dfrac{1}{\ell} (d\bar{\beta} + \bar{\beta} \wedge A - \lambda \wedge \bar{\beta}) & - d\lambda + \dfrac{1}{\ell \ell'} \bar{\beta} \wedge \alpha
		\end{pmatrix} \nonumber
	\end{align}
	with:
 	\begin{align*}
		& \h\text{-part}
        \begin{cases}
        \mathfrak{so}\text{-part:}\quad F = R + \dfrac{1}{\ell \ell'} \phi 
		\quad \text{with}~ \phi= \beta \wedge \bar{\alpha} + \alpha \wedge \bar{\beta} \\
        \\[-4mm]
		\text{dilation:} \quad f = d\lambda + \dfrac{1}{\ell \ell'} \bar{\alpha} \wedge \beta 
          \end{cases}
		\\
		& \numberthis \label{eq:courbures} \\[-2mm]
		& \m\text{-part}
        \begin{cases}
            \Pi = d\alpha + A \wedge \alpha + \lambda \wedge \alpha, \qquad \bar{\Pi} = d\bar{\alpha} + \bar{\alpha} \wedge A - \bar{\alpha} \wedge \lambda \\
            \\[-4mm]
		 \Theta = d\beta + A \wedge \beta - \lambda \wedge \beta =  T - \lambda \wedge \beta, \qquad \bar{\Theta} = \bar{T} + \bar{\beta}\wedge\lambda
        \end{cases}
         	\end{align*}
(see {\em e.g.} \cite{Francois:2015oca}).
	For further use, it is worthwhile to notice that the $[\m,\m]\subset\h$ contribution to $\bOmh$ is given by
    \begin{align}
       \tfrac{1}{2} [\varpim,\varpim] = 
       \dfrac{1}{\ell \ell'}  \begin{pmatrix}
			\bar{\alpha} \wedge \beta  
			& 0 & 0 \\
			0 &  \beta \wedge \bar{\alpha} + \alpha \wedge \bar{\beta} & 0 \\
			0 & 0 & \bar{\beta} \wedge \alpha
		\end{pmatrix}.
        \label{eq:mm-moebius}
    \end{align}
	
	The Bianchi identities are given by $D\bOm = d\bOm + [\varpi, \bOm] = 0$ and, on account of the identification \eqref{eq:graded-equiv}, splits into the identities for each sector
\begin{align*}
		& \h\text{-part}
        \begin{cases}
        \mathfrak{so}\text{-part:}\quad dR + [A,R] + \frac{1}{\ell\ell'} \big( d\phi + [A,\phi] + \beta \wedge \bar{\Pi} - \Pi \wedge \bar{\beta} + \alpha \wedge \bar{\Theta} - \Theta \wedge \bar{\alpha} \big) = 0 \\
        \\[-4mm]
        \text{dilation:} \quad d(\bar{\alpha}\wedge\beta) +\bar{\alpha}\wedge\Theta - \bar{\Pi}\wedge\beta = 0 
          \end{cases}
		\\[2mm]
		& \m\text{-part}
        \begin{cases}
            d\Theta + (A-\lambda)\wedge\Theta + (d\lambda - R)\wedge\beta =0 \\
            \\[-4mm]
		 d\Pi + (A+\lambda)\wedge \Pi - (d\lambda - R)\wedge\alpha = 0.
        \end{cases}
         	\end{align*}      
	
	\smallskip
	\begin{remark}
		If we choose a Cartan geometry such that $\Theta= 0$, $f = 0$ and Ric$(F)=0$,  \cite[Chapter 7, Def 2.7]{sharpe2000differential} then, as described in \cite[p.52]{francois_reduction_2014-1}, $\bar{\Pi}$ and $\bar{\alpha}$ become, respectively, the Cotton and Schouten tensors, while $F$ would correspond to the Weyl tensor. These tensors constitute the building blocks of conformal gravity.
	\end{remark}

	\subsection{Characteristic poylynomials for $G = SO(4,2)/ \{\pm I\}$ and associated action}
	
	Recall the Levi-Civita symbols and tensors are still linked via the relations: 
	\[
	\varepsilon_{abcd} = \varepsilon_{\mu_1 \mu_2 \mu_3 \mu_4} \betainvmod[\mu_1]{a} \betainvmod[\mu_2]{b} \betainvmod[\mu_3]{c} \betainvmod[\mu_4]{d} \quad\text{and}\quad \varepsilon^{abcd} = \varepsilon^{\mu_1 \mu_2 \mu_3 \mu_4} \betamod[a]{\mu_1} \betamod[b]{\mu_2} \betamod[c]{\mu_3} \betamod[d]{\mu_4}.
	\] 
	The trace $\Tr$ used to compute the Pontrjagin number associated to the $\g$-valued curvature $\bOm$ is given by
	\[
	\Tr(\varphi\varphi') = - \eta_{ac} \eta_{bd}\varphi_{\mathfrak{so}}^{ab} \varphi_{\mathfrak{so}}^{'cd} + 2 \varphi_{\mathbb{R}} \varphi'_{\mathbb{R}} + \dfrac{2}{\ell\ell'} \eta_{ab} (\varphi_{-1}^a \varphi_{1}^{'b} + \varphi_{1}^a \varphi_{-1}^{'b}) = K_\g(\varphi,\varphi')/4
	\]
in view of the decomposition \eqref{eq:graded-decomp} and \eqref{eq:Killing_Mob_special} below.

    \bigskip
	Taking a linear combination of the type \eqref{eq:action_start} made of the invariant polynomial in $F$ (the Pfaffian for the $\mathfrak{so}$-part) and the Pontrjagin number $\mathdutchcal{P}(\bOm)$ yields the action:
    \begin{align}
		\label{gen Mobius action}
		S_G[\varpi] &=  \int_{\mathcal{M}_i} \bigl(e \Pf (\dfrac{F}{2\pi}) + r \overbrace{\det ( \mathds{1} + \dfrac{\bOm}{2\pi})}^{P(\bOm)} + y \det ( \mathds{1} + \dfrac{\bOmh}{2\pi})\bigr) 
	\end{align} 
        where $\mathcal{M}_i$ is either the $8$-dimensional manifold $\widetilde{\mathcal{M}}$ defined by the Cartan geometry $G/H$ or a $4$-dimensional submanifold $\mathcal{M} \subset\widetilde{\mathcal{M}}$ defined by 
        the distribution:
			\begin{prop}
				\label{submanifold} The subset of vector fields
				$\mathcal{D}_k = \{ X \in \GamTtM | \alpha (X) = \dfrac{k}{2} \beta (X) 
                \}$ defines an integrable submanifold of dimension four $\mathcal{M} \subset\widetilde{\mathcal{M}}$.
		\end{prop}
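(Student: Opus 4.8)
The plan is to realize $\mathcal{D}_k$ as a Pfaffian system of constant rank and then invoke the Frobenius theorem, checking involutivity through the structure equations~\eqref{eq:courbures}. Working locally, I would set $\omega^a:=\alpha^a-\tfrac{k}{2}\beta^a$ for $a=0,\dots,3$, so that $\mathcal{D}_k=\bigcap_a\ker\omega^a$. Since $\varpim=\varpi_1\oplus\varpi_{-1}$ is the soldering form of a Cartan connection, condition (c) in the definition of a Cartan connection (which the paper already uses to identify tangent spaces with $\g/\h$) makes $\varpim$ a pointwise linear isomorphism of $T\widetilde{\mathcal{M}}$ onto $\m=\g_1\oplus\g_{-1}$; hence $\{\alpha^a,\beta^a\}_{a=0}^{3}$ is a local coframe on the $8$-dimensional $\widetilde{\mathcal{M}}$, and because $\{\omega^a,\beta^a\}_{a=0}^{3}$ differs from it by an invertible triangular change of basis, the $\omega^a$ are pointwise independent. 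This already shows that $\mathcal{D}_k$ is a smooth distribution of rank $8-4=4$, with $\beta|_{\mathcal{D}_k}$ nondegenerate — the datum that will turn an integral leaf into a tetrad geometry.

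Next, by the dual Frobenius criterion it suffices to show that the ideal $\mathcal{I}\subset\Omega^{\bullet}(\widetilde{\mathcal{M}})$ generated by $\omega^0,\dots,\omega^3$ is differential, i.e.\ $d\omega^a\in\mathcal{I}$ for each $a$. Reading the structure equations off~\eqref{eq:courbures}, $d\alpha=\Pi-A\wedge\alpha-\lambda\wedge\alpha$ and $d\beta=\Theta-A\wedge\beta+\lambda\wedge\beta$, a short computation gives
\[
d\omega^a=\Pi^a-\tfrac{k}{2}\Theta^a-A^{a}{}_{b}\wedge\omega^b-\lambda\wedge\omega^a-k\,\lambda\wedge\beta^a .
\]
The terms $A^{a}{}_{b}\wedge\omega^b$ and $\lambda\wedge\omega^a$ already lie in $\mathcal{I}$, so I would reduce the claim to showing that $\Pi^a-\tfrac{k}{2}\Theta^a-k\,\lambda\wedge\beta^a$ annihilates every pair of vectors tangent to $\mathcal{D}_k$.

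The structural input I would use here is that $\g=\mathfrak{co}(3,1)\oplus\m$ is a symmetric Lie algebra, and more sharply that $\m_k:=\varpim(\mathcal{D}_k)$ — a ``diagonal'' $4$-dimensional subspace of $\m=\g_1\oplus\g_{-1}$, identified with $\mathbb{R}^{3,1}$ through either projection — satisfies $[\m_k,\m_k]\subset\mathfrak{so}(3,1)$: in the bracket formula of Section~\ref{section mobius} the would-be dilation component of such a bracket is proportional to $\bar{a}b'-\bar{a}'b$, which on $\m_k$ collapses to $\eta(v,v')-\eta(v',v)=0$ by symmetry of $\eta$. Hence $\mathfrak{so}(3,1)\oplus\m_k$ is a genuine Lie subalgebra of $\g$ — one of the mutated Lorentzian algebras of Section~\ref{section SO 5 ETC} — and feeding this into $\bOm=d\varpi+\tfrac12[\varpi,\varpi]$ shows that, as long as the Cartan curvature restricted to $\mathcal{D}_k$ stays valued in $\mathfrak{so}(3,1)\oplus\m_k$ (equivalently $f|_{\mathcal{D}_k}=0$ together with $\bOmm|_{\mathcal{D}_k}$ valued in $\m_k$), the leftover $2$-form above lies in $\mathcal{I}$. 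Frobenius then foliates $\widetilde{\mathcal{M}}$ by $4$-dimensional integral manifolds, any leaf being the claimed $\mathcal{M}$, whose induced data $(\beta|_{\mathcal{M}},A|_{\mathcal{M}},\dots)$ form a Lorentzian Cartan geometry of the type of Section~\ref{section SO 5 ETC}.

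The hardest part will be this last step: the Weyl-weight mismatch between $\alpha$ (weight $+1$) and $\beta$ (weight $-1$) produces the genuinely extra term $-k\,\lambda\wedge\beta^a$, which is not in $\mathcal{I}$ for a completely arbitrary Cartan connection. Taming it amounts to a compatibility condition between the soldering constraint $\alpha=\tfrac{k}{2}\beta$ and the structure equations — precisely that the Cartan curvature along $\mathcal{D}_k$ be valued in the subalgebra $\mathfrak{so}(3,1)\oplus\m_k$ — and I expect that either the construction of the relevant geometries forces this or it must be built into the data defining $\mathcal{M}$. Pinning down exactly which is the crux of a complete argument.
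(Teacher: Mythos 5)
Your setup and reduction are exactly the paper's route: the Pfaff system $\omega^a=\alpha^a-\tfrac{k}{2}\beta^a$ of rank four, the Frobenius criterion, and your formula for $d\omega^a$ are correct, and the paper's key line --- that for $X,Y\in\mathcal{D}_k$ one has $(\Pi-\lambda\wedge\alpha)(X,Y)=\tfrac{k}{2}(\Theta+\lambda\wedge\beta)(X,Y)=\tfrac{k}{2}T(X,Y)$, whence $d(\alpha-\tfrac{k}{2}\beta)=-A\wedge(\alpha-\tfrac{k}{2}\beta)$ on the distribution --- is precisely the statement that your leftover $2$-form $\Pi^a-\tfrac{k}{2}\Theta^a-k\,\lambda\wedge\beta^a$ lies in the ideal $\mathcal{I}$. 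The gap is in how you propose to tame it. The condition you put forward, that $\bOm$ restricted to $\mathcal{D}_k$ take values in $\mathfrak{so}(3,1)\oplus\m_k$ (i.e. $f|_{\mathcal{D}_k}=0$ and $\bOmm|_{\mathcal{D}_k}$ valued in $\m_k$), is \emph{not} sufficient and is not the condition the paper uses: $\bOmm|_{\mathcal{D}_k}$ valued in $\m_k$ means $\Pi(X,Y)=\tfrac{k}{2}\Theta(X,Y)$ on $\mathcal{D}_k$, which leaves the leftover equal to $-k(\lambda\wedge\beta)(X,Y)$; since $\beta$ restricts to a coframe of the rank-four distribution, this vanishes (for $k\neq 0$) only if $\lambda$ annihilates $\mathcal{D}_k$ --- a condition on the connection $1$-form itself, not on its curvature. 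The correct compatibility condition is the $SO(3,1)$-covariant matching $(d\alpha+A\wedge\alpha)|_{\mathcal{D}_k}=\tfrac{k}{2}(d\beta+A\wedge\beta)|_{\mathcal{D}_k}$, in which the $\lambda$-terms are retained; this is exactly the identity the paper asserts. Moreover, you explicitly leave open whether your condition holds or must be imposed, so the involutivity step --- the whole content of the Proposition --- is not established in your proposal.

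For comparison, the paper does not derive this condition from the general mutated M\"obius geometry either: it asserts it, and the accompanying footnote makes clear that the Proposition is to be read within a subclass of conformal Cartan connections selected by this ``gauge fixing'' (freezing the $\g_1$ degrees of freedom to those of $\g_{-1}$); once that identity is granted, Frobenius (Sharpe, Prop.~5.3) finishes the argument. So your instinct that involutivity is not automatic for an arbitrary Cartan connection, and your observation that the diagonal $\m_k$ closes with $\mathfrak{so}(3,1)$ into a mutated Lorentzian subalgebra, are both sound and consonant with the paper's footnote; but as written your proposal mis-identifies the needed compatibility condition (a curvature condition does not remove $-k\,\lambda\wedge\beta^a$) and stops short of proving integrability, so it does not constitute a proof of the statement.
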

		
		\begin{proof} Indeed, 
				$\mathcal{D}_k = \{ X \in \GamTtM | \alpha (X) = \dfrac{k}{2} \beta (X)
                \}$  provides us with the vielbein $\alpha - \dfrac{k}{2} \beta$ which corresponds to the four Pfaff forms defining the submanifold $\cm$ such that $\alpha - \dfrac{k}{2} \beta = 0$ on $\cm$.
                For all $X,Y \in \mathcal{D}_k$ one has $ (\Pi - \lambda \wedge \alpha) (X,Y) =  \dfrac{k}{2} ( \Theta + \lambda \wedge \beta ) (X,Y) = \dfrac{k}{2} T(X,Y) $ which yields
				\begin{align*}
					d(\alpha - \dfrac{k}{2} \beta) = - A \wedge (\alpha - \dfrac{k}{2} \beta).
				\end{align*}
				Hence, by the Frobenius theorem \cite[Prop.5.3, p.81]{sharpe2000differential} the corresponding distribution defines $\cm$ as an integrable submanifold of $\widetilde{\mathcal{M}}$. Therefore, there is a subclass of conformal Cartan connections characterizing $\cm$; this subclass can be reached through this kind of "gauge fixing".\footnote{\label{footnote:alpha-beta} We are very indebted to Thierry Masson for his key input in the construction of this integrable distribution defined by this specific Cartan connection. This ought to resemble a gauge fixing which amounts to freezing the degrees of freedom of the $\g_1$-part to be those of the $\g_{-1}$-sector. It can be linked with \cite[Proof of Proposition 3.1, p.285]{sharpe2000differential} where the two tetrads are taken to be linearly dependent each other. To some extent, this selects a mutated Weyl geometry within the mutated Möbius one. One thus gets $\dim\mathcal{M}=11-7=4$.
        One may also consider manifolds of other dimensions by imposing other constraints that reduce the number of degrees of freedom of the soldering form $\varpim$.  } 
		\end{proof}
       \begin{remark}
        The Pfaffian of $F$, that is $\Pf(\dfrac{F}{2\pi})$ is null if $dim(\mathcal{M}_i) \neq 4$.
    \end{remark} 
    For the time being we focus on the case of an integral on $\mathcal{M}_i = \mathcal{M}$ corresponding to the action:
	\begin{align*}
		\label{Mobius action}
		S_G[\varpi] &= r \mathdutchcal{P}(\overline{\Om}) + \int_\mathcal{M} (e \Pf (\dfrac{F}{2\pi}) + y \det ( \mathds{1} + \dfrac{\bOmh}{2\pi}))\\
		&= \int_{\mathcal{M}}  \Bigl( e \Pf( \dfrac{F}{2\pi} ) - \dfrac{r}{8\pi^2} \Tr( \bOm\wedge \bOm ) - \dfrac{y}{8\pi^2} \Tr (\bOmh \wedge \bOmh) \Bigr) \numberthis \\
		%%%%%%%%%%%%%
		& = \int_{\mathcal{M}} \Bigl(
		\dfrac{e }{2 (4\pi)^2} \varepsilon_{abcd} F^{ab} \wedge F^{cd}  
		- \dfrac{r}{8\pi^2} \Tr ( \bOm \wedge \bOm ) 
		- \dfrac{y}{8\pi^2} \Tr ( \bOmh \wedge \bOmh ) 
		\Bigr) \\
		%%%%%%%%%%%%%%%%
		&= \int_{\mathcal{M}} \biggl(
		\dfrac{e }{2(4\pi)^2} \varepsilon_{abcd} \bigl(  R^{ab} \wedge R^{cd} + \dfrac{2}{\ell \ell'} R^{ab} \wedge \phi^{cd} + \dfrac{1}{(\ell \ell')^2} \phi^{ab} \wedge \phi^{cd}
		\bigr) \\
		& \qquad \quad + \dfrac{1}{8 \pi^2} \Bigl( (r+y) \bigl(R^{ab} \wedge R_{ab} + \dfrac{2}{\ell \ell'} R^{ab} \wedge \phi_{ab} + \dfrac{1}{(\ell \ell')^2} \underbrace{\phi^{ab} \wedge \phi_{ab}}_{=0}\bigr)\\
        & \qquad \quad - \dfrac{4r}{\ell \ell'} \eta_{ab} \Pi^a \wedge \Theta^b - 2(r+y)f \wedge f
		\Bigr)
		\biggr) \\
		%%%%%%%%%%%%%%%%
		&= \int_{\mathcal{M}} \biggl(
		\dfrac{e }{2(4\pi)^2} \bigl( \varepsilon_{abcd} R^{ab} \wedge R^{cd} 
		+ \dfrac{4}{\ell\ell'} \sqrt{|\tilde{g}|} \varepsilon_{abcd} \varepsilon^{r_1s_1cs_2} \Rmod[\mu\nu]{ab} \betainvmod[\mu]{r_1} \betainvmod[\nu]{s_1} \alphamod[d]{\mu_4} \betamod[\mu_4]{s_2} d^4x \\
		& \qquad \qquad \qquad \quad + \dfrac{16}{\ell^2\ell'^2} \sqrt{|\tilde{g}|} \varepsilon_{abcd} \varepsilon^{as_1cs_2} \alphamod[b]{\mu_2} \betainvmod[\mu_2]{s_1} \alphamod[d]{\mu_4} \betainvmod[\mu_4]{s_2} d^4x
		\bigr) \\
		& \qquad + \dfrac{1}{8 \pi^2} \Bigl( (r+y) \bigl(R^{ab} \wedge R_{ab} + \dfrac{4}{\ell\ell'} \sqrt{|\tilde{g}|} \varepsilon^{r_1s_1r_2s_2} \Rmod[r_1s_1]{ab} \betamod[c]{\mu_1} \alphamod[d]{\mu_2} \betainvmod[\mu_1]{r_2} \betainvmod[\mu_2]{s_2} \eta_{ca} \eta_{db} d^4x\bigr) \\
		& \qquad \qquad \quad \quad - \dfrac{4r}{\ell \ell'} \eta_{ab} \Pi^a \wedge \Theta^b 
		- 2 (r+y) f \wedge f
		\Bigr)
		\biggr).
	\end{align*} 
    One can observe that unless $\alpha \propto \beta$ we do not retrieve the Palatini term nor the cosmological constant term. 
    
    From the very construction of the action involving polynomials of degree 2 in the curvature $\bOm$ the gauge action is invariant under the duality symmetry $\bOm \leftrightarrow -\bOm$
    \footnote{Considering instead the action on $\mathcal{M}_i = \widetilde{\mathcal{M}}$ one finds the duality symmetry $\bOm \leftrightarrow \sqrt[4]{1}\bOm$.} which can be split as:
    \begin{equation}
		\begin{cases}
			\label{mobius duality}
			F \leftrightarrow - F  \\
			f \leftrightarrow - f  \\
            \{\dfrac{\Pi}{\ell'} \leftrightarrow - \dfrac{\Theta}{\ell}\} \cup \left\{ \{ \dfrac{\Pi}{\ell'} \leftrightarrow - \dfrac{\Pi}{\ell'} \} \cap \{ \dfrac{\Theta}{\ell} \leftrightarrow - \dfrac{\Theta}{\ell} \} \right\}
		\end{cases} 
	\end{equation}
    the action being invariant under each of these transformations.
    Particular transformations satisfying \ref{mobius duality} are:
    \begin{equation}
		\begin{cases}
			\label{mobius duality2}
			R \leftrightarrow - \dfrac{1}{\ell\ell'} (\beta \wedge \bar{\alpha} + \alpha \wedge \bar{\beta})  \\
			  d\lambda \leftrightarrow - \dfrac{1}{\ell\ell'} \bar{\alpha} \wedge \beta \\
            \{ \dfrac{1}{\ell} \beta \leftrightarrow - \dfrac{1}{\ell'} \alpha \}\footnotemark \cup \left\{  \{ \dfrac{\alpha}{\ell'} \leftrightarrow - \dfrac{\alpha}{\ell'} \} \cap \{ \dfrac{\beta}{\ell} \leftrightarrow - \dfrac{\beta}{\ell} \} \right\}
		\end{cases} 
	\end{equation}
    
    \footnotetext{Left hand side of this condition constrains $k=\pm2$ in the context of the identification $\alpha = \dfrac{k}{2} \beta$.}
	
    Thus, upon identifying $\alpha = \dfrac{k}{2} \beta$, the curvature entries reduce to
    \[
    \Theta = T - \lambda \wedge \beta, \quad
    \Pi 
    = \dfrac{k}{2} (T + \lambda \wedge \beta),\quad \phi = k\beta\wedge\bar{\beta} = k\ell^2 \xi,\quad F= R + \dfrac{k\ell}{\ell'} \xi \quad \text{and}\quad f=d\lambda
    \]
    and the action \eqref{Mobius action} reads: 
	\begin{align*}
		\label{gauge action Möbius}
		S_G[\varpi] 
		%%%%%%%%%%%%%%%%%%%%%%%%%
		&= \int_{\mathcal{M}} \biggl(
		\dfrac{e }{2(4\pi)^2} \varepsilon_{abcd} \bigl(  R^{ab} \wedge R^{cd} + \dfrac{2k\ell}{\ell'} R^{ab} \wedge \xi^{cd} +\dfrac{k^2\ell^2}{\ell'^2} \xi^{ab} \wedge \xi^{cd}
		\bigr) \\
		& + \dfrac{1}{8 \pi^2} \Bigl( (r+y) \bigl(R^{ab} \wedge R_{ab} + \dfrac{2k\ell}{\ell'} R^{ab} \wedge \xi_{ab} + \dfrac{k^2\ell^2}{\ell'^2} \underbrace{\xi^{ab} \wedge \xi_{ab}}_{=0}\bigr) - \dfrac{2rk}{\ell \ell'} \eta_{ab} \Theta^a \wedge \Theta^b - 2(r+y)f^2
		\Bigr)
		\biggr) \\
		%%%%%%%%%%%%%%%%%%%%%%%%%%
		& = \int_{\mathcal{M}} \biggl(
		\dfrac{e }{2(4\pi)^2} \varepsilon_{abcd} \bigl(  R^{ab} \wedge R^{cd} 
		+ \dfrac{2k\ell}{\ell'} R^{ab} \wedge \xi^{cd} +\dfrac{k^2\ell^2}{\ell'^2} \xi^{ab} \wedge \xi^{cd}
		\bigr)  \\
		& + \dfrac{1}{8 \pi^2} \Bigl( (r+y) \bigl(R^{ab} \wedge R_{ab} 
		+ \dfrac{2k\ell}{\ell'} R^{ab} \wedge \xi_{ab}\bigr) 
		- \dfrac{2rk}{\ell \ell'} \eta_{ab} T^a \wedge T^b 
		% %
		% %
		- 2 (r+y) d\lambda \wedge d\lambda
		\Bigr)
		\biggr) \\
		%%%%%%%%%%%%%%%%%%%%%%%%%%%
		&= \int_{\mathcal{M}} \biggl(
		\dfrac{e }{2(4\pi)^2} \bigl( \varepsilon_{abcd} R^{ab} \wedge R^{cd} 
		+ \dfrac{8k}{\ell\ell'} \sqrt{|\tilde{g}|} \Rmod[\mu\nu]{ab} \betainvmod[\mu]{a} \betainvmod[\nu]{b} d^4x 
		+  \dfrac{24k^2}{\ell^2\ell'^2} \sqrt{|\tilde{g}|} d^4x
		\bigr) \\
		& \qquad \quad + \dfrac{1}{8 \pi^2} \Bigl( (r+y) \bigl(R^{ab} \wedge R_{ab} 
		+ \dfrac{2k}{\ell\ell'} \sqrt{|\tilde{g}|} \varepsilon^{rs}_{~~ab} \Rmod[\mu\nu]{ab} \betainvmod[\mu]{r} \betainvmod[\nu]{s} d^4x\bigr) \\
		& \qquad \qquad \quad \quad - \dfrac{2rk}{\ell \ell'} \eta_{ab} T^a \wedge T^b 
		% %
		% %
		- 2 (r+y) d\lambda \wedge d\lambda
		\Bigr)
		\biggr). \numberthis
	\end{align*}
    
	The resulting action for $k=0$ reads:
	\begin{align}
		S_G[\varpi] &= \int_{\mathcal{M}} \Bigl(
		\dfrac{e }{2(4\pi)^2} \varepsilon_{abcd} R^{ab} \wedge R^{cd}
		+ \dfrac{1}{8 \pi^2} \bigl( (r+y) R^{ab} \wedge R_{ab} 
		- 2 (r+y) d\lambda \wedge d\lambda
		\bigr)
		\Bigr) .
	\end{align}
	and corresponds to the action \eqref{eq:action-ISO} for $G=ISO(3,1) \Leftrightarrow k = 0$, up to the $\lambda$ kinetic term. However, ignoring the boundary terms leaves the equations of motion unchanged if we keep the same matter action as in the preceding example \ref{Matter action}.

    \smallskip
	While for $k \neq 0$, a direct comparison between $S'$ \eqref{action Perez} and $S_G$ \eqref{gauge action Möbius} amounts to identifying the 6 coupling constants present in $S'$ via the system:
	\begin{equation}
		\begin{cases}
			\label{system Möbius}
			\dfrac{e k}{8\pi^2 \ell\ell'}  = \alpha_1
			\\
			\dfrac{(r+y) k }{4\pi^2 \ell\ell'} = (\alpha_2 - \alpha_5)
			\\
			\dfrac{r +y}{8\pi^2} = \alpha_3 
			\\
			\dfrac{e }{(4\pi)^2} = \alpha_4
			\\
			- \dfrac{r k }{4\pi^2 \ell\ell'} = \alpha_5  
			\\
			\dfrac{e k^2}{32\pi^2(\ell\ell')^2}  = \alpha_6 
		\end{cases}
		\Leftrightarrow
		\begin{cases}
			\alpha_1 = \dfrac{2k}{\ell\ell'} \alpha_4 = \dfrac{4\ell\ell'}{k} \alpha_6  \\
			\alpha_2 - \alpha_5 = \dfrac{2k}{\ell\ell'} \alpha_3 = - (1+ \dfrac{y}{r}) \alpha_5 \\
			\alpha_1 = \dfrac{e k}{(r+y)\ell\ell'} \alpha_3 \\
			r +y= \dfrac{4\pi^2\ell\ell'}{k} (\alpha_2 -\alpha_5)
		\end{cases}
	\end{equation}
	The solution of \eqref{system Möbius} in terms of the parameters $e$, $r$, $k$, $\ell$ and $\ell'$ is: 
	\begin{equation}
		\begin{cases}
			\label{solution Möbius}
			\alpha_1 = \dfrac{2k}{\ell\ell'} \alpha_4 = \dfrac{4\ell\ell'}{k} \alpha_6  \\
			\alpha_2 = \dfrac{ky}{4\pi^2 \ell\ell'} =\dfrac{2k y}{(r+y)\ell\ell'} \alpha_3 = - \dfrac{y}{r} \alpha_5 \\
			\alpha_1 = \dfrac{e}{2y} \alpha_2 .
		\end{cases}
	\end{equation}
	Hence, the resulting action reads:
	\begin{align*} \label{eq:action-Moebius-bis}
		S_G[\varpi] 
		= \int_{\mathcal{M}} &\Bigl(
		\overbrace{ \dfrac{k }{4\pi^2\ell\ell'} \big( \dfrac{e}{4} \underbrace{R^{ab} \wedge \beta^c\wedge\beta^d \varepsilon_{abcd}}_{Palatini} + y R^{ab} \wedge \beta_a\wedge\beta_b \big)}^{Holst}
		+ \overbrace{\dfrac{e k^2}{32\pi^2(\ell\ell')^2}  \beta^a\wedge\beta^b \wedge \beta^c\wedge\beta^d \varepsilon_{abcd}}^{Bare~Cosmological~constant} \nonumber
		\\
		%%%%%%%%
		& \qquad + \overbrace{\dfrac{r +y}{8 \pi^2} R^{ab} \wedge R_{ab}}^{Pontrjagin}
		+ \overbrace{\dfrac{e }{2(4\pi)^2} R^{ab} \wedge R^{cd} \varepsilon_{abcd}}^{Euler} 
		\overbrace{- \dfrac{r k }{4 \pi^2 \ell\ell'} (T^a \wedge T_a - R^{ab} \wedge \beta_a\wedge\beta_b)}^{Nieh-Yan}
		\Bigr) \nonumber \\
		% %
		% %
		&\qquad \overbrace{- \dfrac{r+y}{4\pi^2} d\lambda \wedge d\lambda}^{Dilation ~ kinetic ~ term}
		\Bigr) . \numberthis
	\end{align*}

    This action corresponds to \eqref{topological action} with the substitution $k/\ell^2 \to k/(\ell\ell')$ plus an additional kinetic term for the dilations. We will show in the next two subsections that in the presence of matter:
    \begin{itemize}
        \item Action \eqref{eq:action-Moebius-bis} gives back the equations of motion \eqref{eom frame Lorentz}-\eqref{eom spin Lorentz}.
        \item Adding the term $-\, \dfrac{v k}{2\pi^2\ell \ell'} \eta_{ab} T^a \wedge \lambda \wedge \beta^b$ 
     in the action \eqref{eq:action-Moebius-bis} leads to a second source of curvature expressed in terms of the spin density of matter, torsion and their variations for the Einstein equations.
    \end{itemize}

  \subsubsection{Fermionic matter field action}
    
    Choosing the symmetric decomposition $ \g= \g_0 \oplus \m $ with $\m = \g_{1} \oplus \g_{-1}$ and $\g_0 = \h = \mathfrak{co}(3,1)$ we compute the Killing metric:
		\begin{align} \label{eq:Killing_Mob_special}
		   K_\g (\varphi,\varphi') = \overbrace{- 4 \eta_{ac} \eta_{bd}\varphi_{\mathfrak{so}}^{ab} \varphi'{}_{\!\!\mathfrak{so}}^{\,cd} + 8 \varphi_{\mathbb{R}} \varphi'_{\mathbb{R}}}^{K_0} + \overbrace{ \dfrac{8}{\ell\ell'} \eta_{ab} ( \varphi_{-1}^a \varphi'{}_{\!\!1}^{\,b} + \varphi_{1}^a \varphi'{}_{\!\!-1}^{\,b} )}^{K_\m} 
		\end{align}  
	   On the 4-dimensional submanifold $\mathcal{M}$ defined by the gauge fixing $\alpha = \dfrac{k}{2} \beta$, one has $\varpim = \beta^a M_a$ and one gets for the metrics
	\begin{align}
        \text{on}~\g: & & & h(\varphi,\varphi') = \kappa K_\g (\varphi,\varphi')\\
        \text{on}~\mathcal{M}: & & & g (X,Y)  = \zeta (\varpim^*h) (X,Y) = \kappa \zeta (\varpim^*K_\m) (X,Y) = \dfrac{8 k \kappa \zeta}{\ell\ell'}\, \eta \bigl( \beta(X) , \beta(Y) \bigr) .
	\end{align}

	If one considers the usual covariant derivative $D_{\mathfrak{so}}=(\partial_\mu + \dfrac{1}{4} A_{ab,\mu} \gamma^a \gamma^b) dx^\mu$, the matter action similar to \eqref{Matter action} is then given by \footnote{Where we set $\kappa^2 \zeta = \tfrac{(\ell\ell')^2}{384k^2}$ such that the overall factor is arranged to stick with the Dirac action given in \cite[p.197]{gockeler_differential_1987}.}:
	\begin{align}
        \hskip -8mm
		S_M[A,\beta] & = 
        \int_{\mathcal{M}} h^\varepsilon\bigl( i \psi, h_{ab} \gamma^a \varpim^b \wedge *D_\mathfrak{so}\psi\bigl)
        = \int_{\mathcal{M}} h^\varepsilon\bigl( i \psi, \eta_{ab} \dfrac{64k^2\kappa^2\zeta}{(\ell\ell')^2} \gamma^a \beta^b \wedge \tilde{*}D_\mathfrak{so}\psi\bigl)
        = \mathrm{Re} S_D\,.
	\end{align}
    Let $C_i$ be the $i$-dimensional Clifford algebra. We can extend the isomorphism \ref{isomorphism} between $\mathfrak{so}(n-1,1)$ and $\mathfrak{spin}(n-1,1)$ to the morphism $\Phi_\varepsilon : \mathfrak{gl}(n,\mathbb{R}) \rightarrow C_2 \oplus C_0$. Let $M = M^{rs} e_r \bar{e}_s \in \mathfrak{gl(n,\mathbb{R)}}$. We define:
    \begin{align}
    \label{isomorphism gen}
		\Phi_\varepsilon (M) 
        &= M^{rs} \Phi_\varepsilon (e_r \bar{e}_s)
        = \dfrac{M^{rs}}{4} \bar{e}_a e_r \bar{e}_s e_b \gamma^a \gamma^b
        = \dfrac{M^{rs}}{4} \eta_{ar} \eta_{sb} \gamma^a \gamma^b \\
        %%%%%%%%%%%%%%%%%%%%%%
        &= \dfrac{1}{4} \bigl(\sum_{a<b} M_{ab} \gamma^a \gamma^b + \sum_{a>b} M_{ab} \gamma^a \gamma^b + \sum_{a} M_{aa} (\gamma^a)^2 \bigr) \\*
        &=  \sum_{a<b} \dfrac{1}{4} (M_{ab} - M_{ba}) \gamma^a \gamma^b +  \dfrac{M_{ab}}{4} \eta^{ab} 
	\end{align}
where we used the relation 
    $(\gamma^a)^2 = 
    \left\{
		\begin{array}{lcll}
			+1 \text{ if } a \leq n-1 \\
			-1 \text{ if } a>n-1
		\end{array}
	\right.$.
    A particular case is the isomorphism between elements of the algebra $\g_0 = \mathfrak{so}(n-1,1) \oplus \mathbb{R}$
    and those of $\mathfrak{spin}(n-1,1) \oplus \mathbb{R}$ which corresponds to $\Phi_\varepsilon : \mathfrak{so}(n-1,1) \oplus \mathbb{R} \rightarrow \mathfrak{spin}(n-1,1) \oplus \mathbb{R}$. The action of  $\Phi_\varepsilon$ on the generators of $\mathfrak{so}(n-1,1)$ and on the identity matrix which generates the dilation part of $\g$ is given by:
    \begin{align}
		\Phi_\varepsilon (J_{rs} \oplus \mathds{1}_n) = \dfrac{1}{4} \bar{e}_a (J_{rs} \oplus \mathds{1}) e_b \gamma^a \gamma^b = \dfrac{1}{4} \bigl( (J_{rs})_{ab} \oplus \bar{e}_a e_r \bar{e}_s e_b \delta^{rs} \bigr) \gamma^a \gamma^b =  \dfrac{1}{2} \eta_{ar} \eta_{bs} \gamma^a \gamma^b \oplus \dfrac{n-2}{4} .
	\end{align}
    
	Thus, a more natural matter action $\tS_M$ with covariant derivative $D_0 = d + \Phi_\varepsilon (\varpiz) = d + \Phi_\varepsilon (A - \lambda \mathds{1}_4)
	= (\partial_\mu + \dfrac{1}{4} A_{ab,\mu} \gamma^a \gamma^b - \dfrac{1}{2} \lambda_\mu) dx^\mu$ taking into account the full action of the
    $ \g_0 = \mathfrak{co}(3,2) $-valued part of the connection $\varpiz$ is:
	
	\begin{align}
		\tS_M[A,\beta,\lambda] & =
        \int_{\mathcal{M}} h^\varepsilon\bigl( i \psi, h_{ab} \gamma^a \varpim^b \wedge *D_0\psi\bigl) = \overbrace{S_M}^{\mathrm{Re} S_D} - 
        \dfrac{1}{2}\int_{\mathcal{M}} h^\varepsilon\bigl( i \psi, h_{ab} \gamma^a \varpim^b \wedge *(\lambda\psi)\bigl).
	\end{align}

   \subsubsection{The total action}
    
	The study of the full action $\tS_T= S_G + \tS_M$ is left for future work. In the following we will restrict ourselves to the study of the action $S_T = S'_G + S_M = S_G + S_M - \int_\mathcal{M} \dfrac{v k}{2\pi^2\ell \ell'} \eta_{ab} T^a \wedge \lambda \wedge \beta^b$.
	
	For $k \neq 0$ the equations of motion for $\beta$, $A$ and $\lambda$ yield (neglecting boundary terms):
	
	\begin{align*}
		\dfrac{\delta \mathcal{L}'_G [A,\beta,\lambda]}{\delta \beta^c}
		%%%%%%%
		& = \dfrac{ek}{8\pi^2\ell\ell'} ( R^{ab} + \dfrac{ k }{\ell\ell'} \beta^a \wedge \beta^b ) \wedge \beta^d \varepsilon_{abcd} 
		+ \dfrac{k}{2\pi^2\ell\ell'} ( y R^{ab} \wedge \beta_b \eta_{ac} + 2v T_c \wedge \lambda - v \beta_c \wedge d\lambda) \\
		%%%%%%%%
		&= \tau_c = - \dfrac{\delta \mathcal{L}_M [A,\beta,\lambda]}{\delta \beta^c} \label{eom frame Möbius} \numberthis \\
		%%%%%%%%%%%%%%%%%%%%%%%%%%%%%%%%%%%%%%%%%%%%%%%%%%%%%%%%%
		\dfrac{\delta \mathcal{L}'_G [A,\beta,\lambda]}{\delta A^{ab}}
		%%%%%%%%
		& = 
		\dfrac{ek}{8\pi^2 \ell\ell'}T^c \wedge \beta^d \varepsilon_{abcd} 
		+ \dfrac{k}{2\pi^2\ell\ell'} \bigl( y T_a - v \lambda \wedge \beta_a \bigr) \wedge \beta_b
		%%%%%%%%%
		= \dfrac{1}{2} \mathfrak{s}_{ab} = - \dfrac{\delta \mathcal{L}_M [A,\beta,\lambda]}{\delta A^{ab}} \label{eom spin Möbius} \numberthis \\
		%%%%%%%%%%%%%%%%%%%%%%%
		\dfrac{\delta \mathcal{L}'_G[A,\beta, \lambda ]}{\delta \lambda} & = - \dfrac{vk}{2\pi^2\ell\ell'} T^a \wedge \beta_a = 0 = - \dfrac{\delta \mathcal{L}_M [A,\beta,\lambda]}{\delta \lambda}\ .
		\numberthis \label{eom scalar}
	\end{align*}
	Equation \eqref{eom scalar} implies either $v=0$ or $T^a \wedge \beta_a=0$. 

    \smallskip
	For $v=0$, one then recovers the usual equations of motion associated to Einstein-Cartan gravity with a Holst action and bare cosmological constant term, and torsion $T$ linked to the spin density of matter.
	
	In the other case, using the Hodge star operator lets us show that $T^a \wedge \beta_a=0$ is equivalent to requiring the axial vector part of torsion to vanish ($\tilde{t}^a = \dfrac{1}{6} T^b_{~\mu \nu} \varepsilon^{a~\mu \nu}_{~b}=0$) as is described in \cite[Table 3]{gronwald_gauge_1996}. The end result being that the new torsion $T^a_{~bc} = Z^a_{~bc} + \delta^a_{[b} t^{\vphantom{a}}_{c]}$ can be expressed solely in terms of $20$ independent components instead of the usual $24$. Here $Z^a_{~bc}$ is a $3$-indices tensor and $t_c$ is a $4$-dimensional vector with respectively $16$ and $4$ independent components.

 \smallskip
    A very particular case satisfying $T^a \wedge \beta_a=0$ is $T=0$, the equations of motion then become:	
	\begin{align}
		\label{EOM Möbius 2}
		\dfrac{ek}{8\pi^2\ell\ell'} ( R^{ab} + \dfrac{ k }{\ell\ell'} \beta^a \wedge \beta^b ) \wedge \beta^d \varepsilon_{abcd} 
		+ \dfrac{k}{2\pi^2\ell\ell'} \bigl( y R^{ab} \wedge \beta_b \eta_{ac} - v \beta_c \wedge d\lambda \bigr) & = \tau_c \\
		%%%%%%%%%%%%%%%%%%%%%%%%%%%%%%%%%%%%%%%%%%%%%%%%%%%%%%%%%
		-\dfrac{vk}{\pi^2\ell\ell'} \lambda \wedge \beta_a \wedge \beta_b
		%%%%%%%%%
		&= \mathfrak{s}_{ab} \label{eom spin 2 Möbius} \\
		%%%%%%%%%%%%%%%%%%%%%%%
		T &= 0 
	\end{align}
	where equation \eqref{eom spin 2 Möbius} links spin density to the scalar gauge fields for dilation: 
	\begin{align}
		\label{link spin density scalar}
		\lambda_c = \lambda_\mu \betainvmod[\mu]{c} =- \dfrac{\pi^2 \ell \ell'}{ 6 vk } \mathfrak{s}^{ab}_{~~abc} \ .
	\end{align}
	
	Using the relation $\gamma = \dfrac{2\alpha_1}{2\alpha_2 -\alpha_5}$, given in \cite[p13]{rezende_4d_2009}, we identify $\gamma = \dfrac{e}{r+2y}$, with $\gamma$ the Barbero-Immirzi parameter.
	
	Provided we set $ e = \dfrac{8 \pi^2 \alpha_1 \ell\ell'}{k} = \dfrac{3\pi}{2\Lambda_0 G} $ and $ \ell \ell' = - \dfrac{3k }{ \Lambda_0 } $ where $\alpha_1 = - \dfrac{1}{16 \pi G}$, applying the Hodge star $*$ to \eqref{EOM Möbius 2} and replacing $\lambda$ according to \eqref{link spin density scalar} gives us the Einstein equations modified by the Barbero-Immirzi part of the action with bare cosmological constant $\Lambda_0$ and with matter sources given by the energy-momentum tensor and a particular combination of derivatives of the spin density of matter (the latter resulting from the presence of the dilation 1-form $\lambda$ of the connection):	
	\begin{align}
		& G_{kc} 
		+ 
        (\dfrac{1}{2\gamma} - \dfrac{\Lambda_0 Gr}{3\pi}) R^{ab}_{~~rs} \varepsilon^{rs}_{~~kb} \eta_{ac}
		+ \Lambda_0 \eta_{kc}
		%%%%
		= - \dfrac{1}{2 \alpha_1 } ( \tau_{ck} - \dfrac{1}{24} \partial_{\mu} \mathfrak{s}^{ab}_{~~ab\nu} \varepsilon^{\mu\nu}_{~~\;ck} ) .
		\label{eom frame 3}
	\end{align}
 
	\begin{remark}
		In this particular case the fact that Torsion is null implies that the Einstein tensor $G_{kc}$ is symmetric (see \cite[p3]{schucker_torsion_2012}  for more details).
		When taking into account that $\eta_{kc}$ is symmetric as well, the anti-symmetricity of $ \partial_{\mu} \mathfrak{s}^{ab}_{~~ab\nu} \varepsilon^{\mu\nu}_{~~\;ck} $ implies that the energy-momentum tensor $\tau_{ck}$ sports an anti-symmetric part if $ \partial_{\mu} \mathfrak{s}^{ab}_{~~ab\nu} \varepsilon^{\mu\nu}_{~~\;ck} \neq 0$. This confirms that we can consider a matter Lagrangian that depends on the spin connection $A$ as we explicitly did.
	\end{remark}

	Separating the energy-momentum tensor $\tau_{ck}$ as $\tau_{ck}=\tau_{M,ck} - \rho_{\text{vac}} \eta_{ck}$ with $-\rho_{\text{vac}}\eta_{ck}$ the part related to the vacuum energy density $\rho_{\text{vac}}\geq0$ then leads to:
	\begin{align}
		& G_{kc} 
		+ (\dfrac{1}{2\gamma} - \dfrac{\Lambda_0 Gr}{3\pi}) R^{ab}_{~~rs} \varepsilon^{rs}_{~~kb} \eta_{ac}
		+ \Lambda \eta_{kc}
		%%%%
		= - \dfrac{1}{2 \alpha_1 } ( \tau_{M,ck} - \dfrac{1}{24} \partial_{\mu} \mathfrak{s}^{ab}_{~~ab\nu} \varepsilon^{\mu\nu}_{~~\;ck} ) \label{eom frame 4}
	\end{align}
	with $\Lambda = \Lambda_0 + \Lambda_{\text{vac}} = \overbrace{ - \dfrac{3k}{\ell\ell'}}^{\Lambda_0} \overbrace{-\dfrac{\rho_{\text{vac}}}{2\alpha_1}}^{\Lambda_{\text{vac}}} = - \dfrac{3k}{\ell\ell'} - \dfrac{4\pi^2\ell\ell'\rho_{\text{vac}}}{ek} = \Lambda_0 + 8 \pi G \rho_{\text{vac}} $ the effective cosmological constant, and $\Lambda_{\text{vac}}$ the vacuum contribution to the cosmological constant.
	
	One can infer that changes in cosmological history of the rate of expansion of the universe could be partly linked to spin-density variations in this way.

    If one does not assume $T=0$ the EOM's instead yield:
    \begin{align}
		G_{kc} 
		  &+ (\dfrac{1}{2\gamma}- \dfrac{\Lambda_0 Gr}{3\pi}) R^{ab}_{~~rs} \varepsilon^{rs}_{~~kb} \eta_{ac}
		+ \Lambda_0 \eta_{kc} \notag \\
		%%%%
		&= - \dfrac{1}{2 \alpha_1 } \Bigl( \tau_{ck} - \dfrac{1}{24} (\eta_{cs}  \beta^s_\nu \partial_{\mu} - T_{c,\mu\nu} ) \bigl( \mathfrak{s}^{ab}_{~~abr} +
        ( \dfrac{1}{2\pi G\gamma} -\dfrac{\Lambda_0r}{3\pi^2}  ) T^m_{~rm} \bigr)\varepsilon^{\mu\nu r}_{~~~\,k} \Bigr) \label{eq:pourRmk} \\[2mm]
		%%%%%%%%%%%%%%%%%%%%%%%%%%%%%%%%%%%%%%%%%%%%%%%%%%%%%%%%%
		\lambda_r &= 
        \dfrac{1}{v} \bigl( \dfrac{\pi}{16 \Lambda_0 G_0} \overbrace{T^c_{~ab} \varepsilon^{ab}_{~~cr}}^{-6\tilde{t}_r = 0} +  \dfrac{y}{3}  T^c_{~rc} + \dfrac{\pi^2}{2\Lambda_0} \mathfrak{s}^{ab}_{~~abr} \bigr) 
        = \dfrac{1}{v} \bigl( \dfrac{y}{3}  T^c_{~rc} + \dfrac{\pi^2}{2\Lambda_0} \mathfrak{s}^{ab}_{~~abr} \bigr) \label{eom spin 3 Möbius} \\
		%%%%%%%%%%%%%%%%%%%%%%%
		\tilde{t^a} &= 0 .
	\end{align}
	
    \begin{remark}
	    The secondary source of curvature $\dfrac{1}{48 \alpha_1 } (\eta_{cs}  \beta^s_\nu \partial_{\mu} - T_{c,\mu\nu} ) \bigl( \mathfrak{s}^{ab}_{~~abr} + ( \dfrac{1}{2\pi G\gamma} -\dfrac{\Lambda_0r}{3\pi^2}  ) T^m_{~rm} \bigr)\varepsilon^{\mu\nu r}_{~~~\,k} $ in \eqref{eq:pourRmk} is highly dependent on the value of $\Lambda_0$ and~$r$.        
        One can for example tune $r=\dfrac{3\pi}{2\Lambda_0G\gamma}$ such that it is now dominated by the term $\dfrac{1}{48 \alpha_1 } (\eta_{cs}  \beta^s_\nu \partial_{\mu} - T_{c,\mu\nu} ) \mathfrak{s}^{ab}_{~~abr} \varepsilon^{\mu\nu r}_{~~~\,k} $.
	\end{remark}

    In case $v=0$ we retrieve the usual equations of motion \eqref{eom lorentz 1}-\eqref{eom lorentz 2}. We will show in the next section the action $S_G + S_M$ is asymptotically topological in the limit $\Lambda_0 \rightarrow 0$.
    
    \bigskip
   
	Replacing $ e = \dfrac{8 \pi^2 \alpha_1 \ell\ell'}{k} = \dfrac{3\pi}{2\Lambda_0 G} $; $ \ell \ell' = - \dfrac{3k }{ \Lambda_0 } $ and $y=\dfrac{3\pi }{ 4\Lambda_0 G \gamma}  - \dfrac{r}{2}$ in the action \eqref{eq:action-Moebius-bis}, we obtain:
	\begin{align*}
		S_G[\varpi] &= \int_\mathcal{M} \bigl( r P(\bOm) + \dfrac{3\pi}{2\Lambda_0 G}  \Pf (\dfrac{F}{2\pi}) + (\dfrac{3\pi}{4\Lambda_0 G\gamma} -\dfrac{r}{2}) \det ( \mathds{1} + \dfrac{\bOmh}{2\pi})
        \bigr) \numberthis \\
		&= \int_{\mathcal{M}} \biggl(\dfrac{\alpha_1}{2} \biggl(
		\overbrace{ \underbrace{R^{ab} \wedge \beta^c\wedge\beta^d \varepsilon_{abcd}}_{Palatini} + (\dfrac{2}{\gamma} - \dfrac{4 \Lambda_0 G r}{3\pi}) R^{ab} \wedge \beta_a\wedge\beta_b }^{Holst}
		\overbrace{-\dfrac{\Lambda_0}{6}  \beta^a \wedge \beta^b \wedge \beta^c \wedge \beta^d \varepsilon_{abcd}}^{Cosmological~constant} 
		\\
		& \qquad \qquad \overbrace{- \dfrac{3}{2\Lambda_0} R^{ab} \wedge R^{cd} \varepsilon_{abcd} }^{Euler}
		\overbrace{ \underbrace{
        - (\dfrac{2Gr}{\pi} + \dfrac{3}{\Lambda_0 \gamma})}_{\tfrac{2\vartheta}{\alpha_1}} R^{ab} \wedge R_{ab} }^{Pontrjagin}
		\overbrace{-\dfrac{8\Lambda_0Gr}{3\pi} \bigl( T^a \wedge T_a - R^{ab} \wedge \beta_a \wedge \beta_b \bigr)}^{Nieh-Yan} \Bigr) \\
		% %
		% %
		&\qquad \qquad \overbrace{-(\dfrac{r}{8\pi^2} + \dfrac{3}{16\pi\Lambda_0G\gamma}) d\lambda \wedge d\lambda}^{\text{ Dilation kinetic term }}
		\biggr)
	\end{align*}
	
	\begin{remark}
		The interpretation with respect to Chern-Simons modified gravity remains, only the potential term changes:
		\begin{align}
			V(\vartheta) &=  
			(\dfrac{1}{\gamma} - \dfrac{ 2\Lambda_0 G r}{3\pi}) \alpha_1 R^{ab} \wedge \beta_a\wedge\beta_b 
			\overbrace{- \dfrac{\Lambda_0\alpha_1}{12} \varepsilon_{abcd} \beta^a \wedge \beta^b \wedge \beta^c \wedge \beta^d }^{Cosmological~constant}
			\overbrace{- \dfrac{3\alpha_1}{4\Lambda_0} R^{ab} \wedge R^{cd} \varepsilon_{abcd} }^{Euler} \nonumber \\
			&\qquad \overbrace{-\dfrac{4\Lambda_0G\alpha_1r}{3\pi} \bigl( T^a \wedge T_a - R^{ab} \wedge \beta_a \wedge \beta_b \bigr)}^{Nieh-Yan}
			% %
			% %
			\overbrace{-(\dfrac{r}{8\pi^2} + \dfrac{3}{16\pi\Lambda_0G\gamma}) d\lambda \wedge d\lambda}^{\text{ Dilation kinetic term }} .
		\end{align}
		
	\end{remark}
	
	Setting $\lambda=0$, we retrieve the action \eqref{topological action mutation}, namely: 
	\begin{align*}
		S_G[\varpi] &= \int_{\mathcal{M}} \dfrac{\alpha_1}{2} \biggl(
		\overbrace{ R^{ab} \wedge \beta^c\wedge\beta^d \varepsilon_{abcd} + (\dfrac{2}{\gamma} - \dfrac{4 \Lambda_0 G r}{3\pi}) R^{ab} \wedge \beta_a\wedge\beta_b }^{Holst}
		\overbrace{-\dfrac{\Lambda_0}{6}  \beta^a \wedge \beta^b \wedge \beta^c \wedge \beta^d \varepsilon_{abcd}}^{Cosmological~constant} 
		\\
		& \qquad \qquad \overbrace{- \dfrac{3}{2\Lambda_0} R^{ab} \wedge R^{cd} \varepsilon_{abcd} }^{Euler}
		\overbrace{
        - (\dfrac{2Gr}{\pi} + \dfrac{3}{\Lambda_0 \gamma}) R^{ab} \wedge R_{ab} }^{Pontrjagin}
		\overbrace{-\dfrac{8\Lambda_0Gr}{3\pi} \bigl( T^a \wedge T_a - R^{ab} \wedge \beta_a \wedge \beta_b \bigr)}^{Nieh-Yan}
		\biggr) .
	\end{align*}
	In particular, this shows that by imposing specific constraints on the components $\alpha$ and $\lambda$ of the Cartan connection $\varpi$ for $G=SO(4,2)/\{\pm I\}$, one can recover the action for the subgroups $SO(4,1),
	~SO(3,2),~\text{or } ISO(3,1)$ already considered in Section~\ref{section SO 5 ETC}. Indeed, one has with the respective constraints:
	\begin{align}
        \{\lambda = 0 \text{ and }
			\alpha = \dfrac{k}{2} \beta\}\, \cap
		\begin{array}{ll}
			%%%%%%%%%%%%%%%%%%%%%
			\left\{ 
			\begin{array}{ll}
				sign(k\ell \ell') = -, \quad \Lambda_0 = - \dfrac{3k}{\ell \ell'} > 0 & \Leftrightarrow G = 	SO(4,1) \\
				k=0  & \Leftrightarrow G = ISO(3,1) \\
				sign(k \ell \ell') = +, \quad \Lambda_0 = - \dfrac{3k}{\ell \ell'} < 0 & \Leftrightarrow G = SO(3,2).
				%%%%%%
			\end{array}		
			\right.
		\end{array}
	\end{align}

    \subsubsection*{$8$-dimensional manifold $\widetilde{\mathcal{M}}$ action}

    In the case $\alpha \neq \dfrac{k}{2} \beta$ corresponding to the $8$-dimensional manifold $\widetilde{\mathcal{M}}$ ($\alpha$ is related to the special conformal transformations) the action \eqref{gen Mobius action} becomes:
	\begin{align}
        \label{gen Möbius action 2}
		S_G[\varpi] &=  \int_{\tilde{\mathcal{M}}} \bigl(\overbrace{r \det ( \mathds{1} + \dfrac{\bOm}{2\pi})}^{P(\bOm)} + y \det ( \mathds{1} + \dfrac{\bOmh}{2\pi})\bigr) \\
		&= \int_{\tilde{\mathcal{M}}}  \Bigl( 
        - \dfrac{r}{128\pi^2} \bigl(\Tr( \bOm^2)^2 - 2 \Tr (\bOm^4) \bigr) 
        - \dfrac{y}{128\pi^2} \bigl(\Tr( \bOmh^2)^2 - 2 \Tr (\bOmh^4) \bigr) \Bigr) \nonumber
	\end{align}

    Study of action \ref{gen Möbius action 2} shall be the subject of another article.

\medskip

    \section{Topological total action}
    \label{Topological total action}
    
    In this section we discuss the topological feature of the actions constructed according to the combination \eqref{eq:action_start} we started with. In doing so, we have considered\footnote{This ought to be considered as a "deformation" of the standard Chern-Simons theory.} the evaluation of some invariant symmetric polynomials $P(\bOmh^n)$ on the $\h$-part of the Cartan curvature $\bOm$ which takes values in a symmetric Lie algebra $\g=\h\oplus\m$. One can establish the following
    \begin{theorem}
    \label{thm cond topo conn}
        Let $H\subset G$ be Lie groups such that $G/H$ defines a symmetric space with symmetric Lie algebra $\g = \h \oplus \m$ and $\mathcal{M}$ be a manifold of dimension $m$. 
    Let $\varpi=\varpih \oplus \varpim$ be a Cartan connection associated to the Cartan geometry modeled on $G/H$ with curvature $\bOm = d\varpi + \tfrac{1}{2} [\varpi,\varpi] = \bOmh \oplus \bOmm$. Let $\varpi_t = t \varpi$ with curvature $\bOm_t = d\varpi_t + \tfrac{1}{2} [\varpi_t,\varpi_t]$.

\smallskip
    Regarding the invariant polynomials $P$ used in our construction two cases are in order:

\medskip\noindent
       Case 1. Let $P\in{\cal I}^n(G)$ be a $G$-invariant polynomial of degree $n$. Restriction of $P$ to $H$ yields $P\in{\cal I}^n(H)$. One has
       \begin{align*}
           (a) &\ dP(\bOmh^n) = -n P([\varpim,\bOmm], \bOmh^{n-1}) = n(n-1) P([\varpim,\bOmh], \bOmh^{n-2},\bOmm) \\
           (b) &\ P(\bOmh^n) + \sum_{k=1}^n \begin{pmatrix} n \\ k \end{pmatrix} P(\bOmh^{n-k},\bOmm^k) = d Q_{2n-1}(\varpi) \ \mathrm{with}\ Q_{2n-1}(\varpi) = n \int_0^1 dt P(\varpi,\bOm^{n-1}_t).
       \end{align*}

\medskip\noindent
       Case 2. Let $P\in{\cal I}^n(H)$ be an $H$-invariant polynomial of degree $n$. This case encapsulates the previous one if the $G$-invariance is not fully taken into account. One has
        \begin{align*}
           (c) &\ dP(\bOmh^n) = -n P([\varpim,\bOmm], \bOmh^{n-1}) \\
           (d) &\ \dfrac{d}{dt} P (\bOm_{t,\h}^n) = n d P(\varpih, \bOm_{t,\h}^{n-1}) + n P([\varpi_{t,\m},\varpim], \bOm_{t,\h}^{n-1}) + n(n-1) P (\varpih, [\bOm_{t,\m},\varpi_{t,\m}],\bOm_{t,\h}^{n-2}).
       \end{align*}
    \end{theorem}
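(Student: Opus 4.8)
The plan is to reduce all four statements to two standard facts about a symmetric invariant polynomial $P$ of degree $n$, written in the graded (de~Rham~$\times$~Lie-algebra) sense: the Leibniz rule $d\,P(A_1,\dots,A_n)=\sum_i(-1)^{|A_1|+\cdots+|A_{i-1}|}P(A_1,\dots,dA_i,\dots,A_n)$, and the infinitesimal invariance identity $\sum_i(-1)^{|X|(|A_1|+\cdots+|A_{i-1}|)}P(A_1,\dots,[X,A_i],\dots,A_n)=0$, valid for any $X$ in the Lie algebra under which $P$ is invariant. These get combined with the $\h\oplus\m$ decomposition of the curvature forced by the symmetric grading $[\h,\h]\subset\h$, $[\h,\m]\subset\m$, $[\m,\m]\subset\h$. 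First I would record, exactly as in the two explicit models of the excerpt, $\bOmh=d\varpih+\tfrac12[\varpih,\varpih]+\tfrac12[\varpim,\varpim]$, $\bOmm=d\varpim+[\varpih,\varpim]$, together with the split of $D\bOm=d\bOm+[\varpi,\bOm]=0$ into $d\bOmh=-[\varpih,\bOmh]-[\varpim,\bOmm]$ and $d\bOmm=-[\varpih,\bOmm]-[\varpim,\bOmh]$.

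For (a) and (c): the Leibniz rule gives $d\,P(\bOmh^n)=n\,P(d\bOmh,\bOmh^{n-1})$; substituting the $\h$-part of the Bianchi identity and using that $n\,P([\varpih,\bOmh],\bOmh^{n-1})=0$ by the $H$-invariance of $P$ (the restriction to $H$ of $P\in\mathcal{I}^n(G)$, or simply $P\in\mathcal{I}^n(H)$) yields $d\,P(\bOmh^n)=-n\,P([\varpim,\bOmm],\bOmh^{n-1})$ — this is (c) and the first equality of (a), and it uses only $H$-invariance. For the second equality of (a) I would invoke the full $G$-invariance of $P$ with $X=\varpim\in\m$ applied to the $\g$-valued arguments $(\bOmm,\bOmh^{n-1})$: since $[\varpim,\bOmm]\in[\m,\m]\subset\h$ and $[\varpim,\bOmh]\in[\m,\h]\subset\m$ this is a legitimate identity on $\g$, and it reads $P([\varpim,\bOmm],\bOmh^{n-1})=-(n-1)P([\varpim,\bOmh],\bOmh^{n-2},\bOmm)$, which multiplied by $-n$ gives the stated form.

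For (b): first, multilinearity and symmetry of $P$ collapse the left-hand side into $\sum_{k=0}^n\binom nk P(\bOmh^{n-k},\bOmm^k)=P\big((\bOmh+\bOmm)^n\big)=P(\bOm^n)$, so the claim is exactly the transgression formula $P(\bOm^n)=d\,Q_{2n-1}(\varpi)$ for the scaling path $\varpi_t=t\varpi$. Here $\tfrac{d}{dt}\bOm_t=d\varpi+t[\varpi,\varpi]=D_{\varpi_t}\varpi$, so $\tfrac{d}{dt}P(\bOm_t^n)=n\,P(D_{\varpi_t}\varpi,\bOm_t^{n-1})$; expanding $d\,P(\varpi,\bOm_t^{n-1})$ by Leibniz, substituting $d\bOm_t=-[\varpi_t,\bOm_t]$, and applying the $G$-invariance identity with $X=\varpi_t$ identifies $d\,P(\varpi,\bOm_t^{n-1})=P(D_{\varpi_t}\varpi,\bOm_t^{n-1})$, hence $\tfrac{d}{dt}P(\bOm_t^n)=d\big(n\,P(\varpi,\bOm_t^{n-1})\big)$; integrating over $t\in[0,1]$ with $\bOm_0=0$ gives (b).

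For (d): I would differentiate $P(\bOm_{t,\h}^n)$ using $\tfrac{d}{dt}\bOm_{t,\h}=d\varpih+[\varpi_{t,\h},\varpih]+[\varpi_{t,\m},\varpim]$ (with $\varpi_{t,\h}=t\varpih$, $\varpi_{t,\m}=t\varpim$); the last term directly contributes $n\,P([\varpi_{t,\m},\varpim],\bOm_{t,\h}^{n-1})$, while for the $d\varpih$-term I integrate by parts and insert the $\h$-part of the Bianchi identity $d\bOm_{t,\h}=-[\varpi_{t,\h},\bOm_{t,\h}]-[\varpi_{t,\m},\bOm_{t,\m}]$. The two pure-$\h$ contributions, $-n(n-1)P(\varpih,[\varpi_{t,\h},\bOm_{t,\h}],\bOm_{t,\h}^{n-2})$ and $n\,P([\varpi_{t,\h},\varpih],\bOm_{t,\h}^{n-1})$, cancel by the $H$-invariance identity, and rewriting $-[\varpi_{t,\m},\bOm_{t,\m}]=[\bOm_{t,\m},\varpi_{t,\m}]$ via graded antisymmetry leaves precisely $n\,d\,P(\varpih,\bOm_{t,\h}^{n-1})+n\,P([\varpi_{t,\m},\varpim],\bOm_{t,\h}^{n-1})+n(n-1)P(\varpih,[\bOm_{t,\m},\varpi_{t,\m}],\bOm_{t,\h}^{n-2})$. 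The main obstacle — indeed the only real work — is the Koszul-sign bookkeeping: $\varpi$ and its $\h,\m$ parts are $1$-forms while $\bOm$ and its parts are $2$-forms, so each time a $1$-form is commuted past other forms inside $P$ (in the Leibniz rule and in the invariance identity) a sign appears that must be tracked, and the reordering by symmetry of $P$ in (a)/(d) must be done consistently. A secondary point, which is exactly what distinguishes the symmetric case from the abelian/reductive-with-$[\m,\m]=0$ one, is that $[\varpim,\varpim]$ is genuinely nonzero and $\h$-valued, so it must be retained both in $\bOmh$ and in $\tfrac{d}{dt}\bOm_{t,\h}$ throughout.
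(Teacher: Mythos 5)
Your proposal is correct and follows essentially the same route as the paper's proof: the Leibniz rule plus the $\h$-part of the Bianchi identity and $H$-invariance for (a)/(c), the $\ad(\m)$-invariance of the $G$-invariant polynomial for the second equality in (a), the standard Chern--Simons transgression along $\varpi_t=t\varpi$ with the binomial expansion of $P(\bOm^n)$ for (b), and the restriction of the transgression computation to the $\h$-components with only the $\h$-Bianchi identity for (d). Your sign bookkeeping (in particular $-[\varpi_{t,\m},\bOm_{t,\m}]=[\bOm_{t,\m},\varpi_{t,\m}]$ and the cancellation of the pure-$\h$ terms by $H$-invariance) reproduces the paper's result, and is in fact spelled out in slightly more detail than in the paper.
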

    
    \begin{proof}
        The proof relies on the ones used for the Chern-Weil theorem in \cite{nakahara_geometry_2018} or for the Chern-Simons theory related to our situation \cite{donnelly1977chern,kobayashi1971g}. 
         Let us first compute
       \begin{align} \label{eq:dP}
        dP(\bOmh^n) &= n P(d\bOmh,\bOmh^{n-1}) \notag\\
        &= n P(d\bOmh+[\varpih,\bOmh]+[\varpim,\bOmm],\bOmh^{n-1}) - 
        n P([\varpih,\bOmh],\bOmh^{n-1}) - n P([\varpim,\bOmm],\bOmh^{n-1}) \notag\\
        &= - n P([\varpim,\bOmm],\bOmh^{n-1})
       \end{align}
      In the middle equation, the first term vanishes thanks to the $\h$-part of the Bianchi identity, namely, $d\bOmh+[\varpih,\bOmh]+[\varpim,\bOmm]=0$, the second term is nothing but $\ad_{\varpih}P(\bOmh^n)$ which vanishes by $H$-invariance. The third term will be under discussion whether $P$ is $G$-invariant or if solely its $H$-invariance will be considered.
        
        According to the Chern-Simons theory, let $\varpi_t=t\varpi$ be an interpolating family of $\g$-valued connections, $0 \leq t \leq 1$, with curvature $\bOm_t= \bOm_{t,\h} \oplus \bOm_{t,\m} = d\varpi_t + \tfrac{1}{2} [\varpi_t,\varpi_t]$ together with the Bianchi identity $d\bOm_t + [\varpi_t ,\bOm_t]=0$. One has $\frac{d}{dt}\bOm_t = d\varpi + [\varpi_t,\varpi]$.
The standard Chern-Simons derivation leads, on the one hand, 
\begin{equation}\label{eq:CS-1}
    \frac{d}{dt} P(\bOm_t^n) = n P(\frac{d}{dt}\bOm_t, \bOm_t^{n-1}) = nP(d\varpi + [\varpi_t,\varpi],\bOm_t^{n-1})
\end{equation}
and on the other hand, 
\begin{equation}\label{eq:CS-2}
    n d P(\varpi, \bOm_t^{n-1}) = n P (d\varpi, \bOm_t^{n-1}) - n(n-1) P(\varpi, d\bOm_t,\bOm_t^{n-2}).
\end{equation}
The use of the Bianchi identity for $\bOm_t$ in the right hand side, relies on the $\Ad$-invariance of $P$, which in our situation at hand, depends on either the $G$ or the $H$-invariance of $P$.

\bigskip\noindent
Case 1. For $P\in{\cal I}^n(G)$, the $\Ad(G)$-invariance tells us that
\begin{equation}\label{eq:ad-inv}
    \ad_{\varpi_t} P(\varpi, \bOm_t^{n-1})= P([\varpi_t,\varpi],\bOm_t^{n-1}) - (n-1) P(\varpi,[\varpi_t,\bOm_t],\bOm_t^{n-2})=0
\end{equation}
and a straightforward computation yields a link with \eqref{eq:CS-1}
\begin{align*}
    n\,d P(\varpi, \bOm_t^{n-1}) &= nP (d\varpi, \bOm_t^{n-1}) - n(n-1) P(\varpi, d\bOm_t,\bOm_t^{n-2}) + \ad_{\varpi_t} P(\varpi, \bOm_t^{n-1}) \\
    &= nP(d\varpi + [\varpi_t,\varpi],\bOm_t^{n-1}) - n(n-1) P(\varpi, d\bOm_t+[\varpi_t,\bOm_t],\bOm_t^{n-2})\\
    &= n P(d\varpi + [\varpi_t,\varpi],\bOm_t^{n-1}) = \frac{d}{dt} P(\bOm_t^n) 
\end{align*}
where the Bianchi identity for $\bOm_t$ has been used. The transgression formula gives
\begin{equation} \label{eq:CS}
    P(\bOm^n) = \int_0^1 \frac{d}{dt} P(\bOm_t^n) dt = \int_0^1 n d P(\varpi, \bOm_t^{n-1}) dt = d Q_{2n-1}(\varpi).
\end{equation}
Restriction of $P\in{\cal I}^n(G)$ to the subgroup $H\subset G$, requires to consider the $\h$-part of the curvature from the decomposition $\bOm=\bOmh \oplus \bOmm$ to obtain the polynomial $P(\bOmh^n)$. Going back to \eqref{eq:dP}, since $P\in{\cal I}^n(G)$, one can use the $\ad (\m)$-invariance,
        \[
        \ad_{\varpim} P(\bOmm,\bOmh^{n-1}) = P([\varpim,\bOmm],\bOmh^{n-1}) + (n-1) P(\bOmm,[\varpim,\bOmh],\bOmh^{n-2}) =0.
        \]
    This latter relation allows to relax some conditions on torsion of the Cartan connection in order to secure \eqref{eq:dP} to be $0$ ($P(\bOmh^n)$ is closed) if one wishes to avoid to consider the $[\m,\m]$ contributions. Indeed, the Bianchi identity $d\bOmm + [\varpih,\bOmm] + [\varpim,\bOmh] = 0$ infers that $[\varpim,\bOmh]=0$ if the covariant derivative of the torsion $d\bOmm + [\varpih,\bOmm]
=0$. One thus gets
         \begin{align*}
           (i) &\ dP(\bOmh^n) = n(n-1) P([\varpim,\bOmh], \bOmh^{n-2},\bOmm) \\[2mm]
           (ii) &\ P(\bOmh^n) + \sum_{k=1}^n \begin{pmatrix} n \\ k \end{pmatrix} P(\bOmh^{n-k},\bOmm^k) = d Q_{2n-1}(\varpi) \ \mathrm{with}\ Q_{2n-1}(\varpi) = n \int_0^1 dt P(\varpi,\bOm^{n-1}_t)
       \end{align*}
this last equation comes from \eqref{eq:CS} by considering the expansion of $P(\bOm^n)=P((\bOmh\oplus\bOmm)^n)$.

       \bigskip\noindent
        Case 2. For  $P\in{\cal I}^n(H)$, only the $H$-invariance can be taken into account. Accordingly, one must restrict eqs\eqref{eq:CS-1} and \eqref{eq:CS-2} to the $\bOmh$ and $\bOm_{t,\h}$ components only and for which only the $\h$-part of the Bianchi identity, $(D\bOm)_\h =0$, can be used. Together with the result \eqref{eq:dP}, all in all, this gives
        \begin{align*}
            (i) & ~ dP(\bOmh^n) = - n P([\varpim,\bOmm],\bOmh^{n-1})\\
            (ii) & ~ \dfrac{d}{dt} P (\bOm_{t,\h}^n) = n d P(\varpih, \bOm_{t,\h}^{n-1})+n P([\varpi_{t,m},\varpim], \bOm_{t,\h}^{n-1}) + n(n-1) P (\varpih, [\bOm_{t,\m},\varpi_{t,\m}],\bOm_{t,\h}^{n-2})
        \end{align*}
        where the last two terms could be gathered as the result of $n\, \ad_{\varpi_{t,m}} P(\varpim,\bOmh^{n-1})$, see \eqref{eq:ad-inv}. But since $P\in{\cal I}^n(H)$, the invariance of $P$ under the $\m$-part of the symmetric Lie algebra $\g$ no longer holds true.
\end{proof}

As can be seen in $(a),(b)(c)$ and $(d)$, the vanishing of the $[\m,\m]$ commutators removes the obstruction to the Chern-Weil theorem in the case of symmetric Lie algebras. Furthermore, as stated in \cite[Proposition 4.1]{lotta2004model} there is an equivalence between a symmetric Lie algebra $\g=\h\oplus\m$ with $[\m,\m]\subset\h$ and an abelian reductive decomposition $\g'=\h\oplus\m'$ with $[\m',\m']=0$ through a mutation map and where $\m\simeq\m'$ as $\h$-modules, ({\em i.e.} $[\h,\m]\subset\m$ and $[\h,\m']\subset\m'$ and $\m\simeq\m'$ as vector spaces, see also \cite[Lemma 6.4, p.220]{sharpe2000differential}). Exploiting this fact, one has 
\begin{corollary}
\label{mm0 corrolary}
    Another way to reach conditions for the Chern-Weil theorem in both cases is to consider the unique (up to isomorphism)\footnote{See \cite[Lemma 6.4]{sharpe2000differential} and \cite[Proposition 4.1]{lotta2004model}.} mutation $\g \rightarrow \g'$ such that $[\m',\m'] = 0$.
\end{corollary}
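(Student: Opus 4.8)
The plan is to combine the structural equivalence recalled just before the statement with Theorem~\ref{thm cond topo conn}, so that the only real work is bookkeeping. By \cite[Proposition 4.1]{lotta2004model} (equivalently \cite[Lemma 6.4, p.220]{sharpe2000differential}), any symmetric Lie algebra $\g=\h\oplus\m$ with $[\m,\m]\subset\h$ is linked, through a mutation map $\mu:\g\to\g'$ with $\mu_{|\h}=\mathrm{id}_\h$, to a reductive decomposition $\g'=\h\oplus\m'$ which is \emph{abelian}, i.e. $[\m',\m']=0$, with $\m\simeq\m'$ as $\h$-modules, and this mutant is unique up to isomorphism. The first step is simply to name this distinguished mutant $\g'$ and observe that it is the one referred to in the statement.

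Next I would transport the Cartan data along $\mu$. Since $\mu$ is an $\Ad(H)$-module isomorphism restricting to the identity on $\h$, post-composition turns the Cartan connection $\varpi=\varpih\oplus\varpim$ of the $G/H$ geometry into a Cartan connection $\varpi'=\varpih\oplus\mu(\varpim)$ modeled on $G'/H$ (this is \cite[Proposition 6.3, p.218]{sharpe2000differential}, already invoked in the text, cf. \eqref{eq:curvMut} and \eqref{eq:mm}). Its curvature $\bOm'=\bOm'_\h\oplus\bOm'_{\m'}$ then satisfies $\tfrac{1}{2}[\mu(\varpim),\mu(\varpim)]'=0$, since this bracket lands in $[\m',\m']=0$; hence $\bOm'_\h=d\varpih+\tfrac{1}{2}[\varpih,\varpih]=R$ carries no $[\m,\m]$-deformation (contrast \eqref{eq:mm} and \eqref{eq:mm-moebius}), and likewise $[\varpi'_{\m'},\bOm'_{\m'}]$, $[\varpi'_{t,\m'},\varpi'_{\m'}]$ and $[\bOm'_{t,\m'},\varpi'_{t,\m'}]$ all vanish, being $[\m',\m']$-valued.

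It then remains to feed this into Theorem~\ref{thm cond topo conn} applied to $\g'$. The invariant polynomial $P$ is unaffected: the $\h$-module isomorphism $\m\simeq\m'$ identifies the $\h$-invariant polynomials used on $\m$ with those on $\m'$, so both Case~1 ($P\in{\cal I}^n(G)$ restricted to $H$) and Case~2 ($P\in{\cal I}^n(H)$) are available for $\g'$, and since $[\m',\m']=0$ the terms distinguishing the two cases vanish anyway. Concretely, in $(a)$/$(c)$ the right-hand side $-nP([\varpi'_{\m'},\bOm'_{\m'}],(\bOm'_\h)^{n-1})$ is zero, so $P((\bOm'_\h)^n)=P(R^n)$ is closed; in $(d)$ the last two terms drop and $\tfrac{d}{dt}P((\bOm'_{t,\h})^n)=n\,dP(\varpih,(\bOm'_{t,\h})^{n-1})$, which integrates on $[0,1]$ to the transgression $P((\bOm'_\h)^n)=dQ_{2n-1}(\varpih)$; in $(b)$ only the torsion sum survives as a correction. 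These are exactly the hypotheses and conclusions of the Chern--Weil theorem, so the abelian mutant provides an alternative route to them, which is the claim.

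I expect no deep computation: the content is an assembly of cited structural facts. The only points requiring care, and hence the main obstacle, are checking that the mutation genuinely yields a Cartan connection on $G'/H$ with unchanged $\h$-part (handled by \cite[Proposition 6.3]{sharpe2000differential}) and that $H$-invariance of $P$ survives the identification $\m\simeq\m'$; one should also make explicit what "conditions for the Chern--Weil theorem" means here, namely closedness of $P((\bOm_\h)^n)$ and existence of the transgression form $Q_{2n-1}$, since the statement is phrased informally.
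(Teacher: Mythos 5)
Your proposal is correct and takes essentially the same route as the paper, which treats the corollary as an immediate consequence of Theorem~\ref{thm cond topo conn} together with the cited equivalence (\cite[Proposition 4.1]{lotta2004model}, \cite[Lemma 6.4]{sharpe2000differential}) between a symmetric Lie algebra and its unique abelian mutant with $[\m',\m']=0$, so that all $[\m,\m]$-valued obstruction terms in $(a)$--$(d)$ vanish identically. Your transport of the Cartan connection via \cite[Proposition 6.3, p.218]{sharpe2000differential} and your observation that in $(b)$ only the mixed torsion terms survive simply make explicit what the paper leaves as a remark preceding the corollary.
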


In the Lorentzian case of Section \ref{section SO 5 ETC}, 
this mutation 
corresponds to setting either $k_1$ or $k_2=0$, thus obtaining the Minkowski geometry.

On the other hand, for the Lorentz$\times$Weyl geometry embedded into the mutated Möbius geometry which is a symmetric Lie algebra, the mutation map giving rise (up to isomorphism) to an abelian reductive decomposition could be
\begin{align}
		\label{mutation Möbius 2}
		\varphi_\g =
		\begin{pmatrix}
            z & \bar{a} & 0 \\
			b  & c	& a \\
			0 & \bar{b}  & -z 
		\end{pmatrix}
		\longrightarrow
		\varphi_{\g'} = \mu(\varphi_\g)=
        \left\{ 
		\begin{array}{ll}
		\begin{pmatrix}
			z & 0 & 0 \\
			\gamma_1 b  & c	&  \gamma_2 a \\
			0 & 0  & -z 
		\end{pmatrix} \in \g'' = \h\oplus \m'\\
        \\
        \begin{pmatrix}
			z & \bar{\gamma}_2 \bar{a} & 0 \\
			0  & c	&  0 \\
			0 & \bar{\gamma}_1 \bar{b}  & -z 
		\end{pmatrix} \in \g' = \h\oplus \m'' 
        \end{array}
        \right. 
\end{align}
where $[\m',\m']=[\m'',\m'']=0$ and $\m\simeq\m'\simeq\m''$ as $\h$-modules. They correspond respectively to \eqref{mutation Möbius} with $\bar{\gamma}_1=\bar{\gamma}_2=0$ or with $\gamma_1=\gamma_2=0$.

\smallskip
Therefore, in the Lorentzian case, see \eqref{eq:mm}, or the mutated Weyl geometry (as a subgeometry of the mutated Möbius geometry) through the identification $\alpha = \dfrac{k}{2} \beta$ together with \eqref{eq:mm-moebius}, 
we respectively have, on the one hand, 
\begin{align*}
\tfrac{1}{2} [\varpim,\varpim] = \left.
\begin{cases}
    \dfrac{k}{\ell^2} \begin{pmatrix}
			\beta\wedge \bar{\beta}& 0 \\
			0 & 0
		\end{pmatrix} =  
       - \dfrac{\Lambda_0}{3} \begin{pmatrix}
			\beta\wedge \bar{\beta}& 0 \\
			0 & 0
		\end{pmatrix}
       \\
       \\
        \dfrac{k}{\ell \ell'}  \begin{pmatrix}
			0 & 0 & 0 \\
			0 &  \beta \wedge \bar{\beta} & 0 \\
			0 & 0 & 0
		\end{pmatrix} =  
       - \dfrac{\Lambda_0}{3}
       \begin{pmatrix}
			0 & 0 & 0 \\
			0 &  \beta \wedge \bar{\beta} & 0 \\
			0 & 0 & 0
		\end{pmatrix}
\end{cases} \right\}= - \dfrac{\Lambda_0}{3} \dfrac{1}{2} \beta^a \wedge \beta^b J_{ab}
      \end{align*}
and, on the other hand,
\begin{align*}
    [\varpim,\bOmm] &= 
    \begin{cases}
        \dfrac{k}{\ell^2} \begin{pmatrix}
        \beta\wedge \bar{T} - T\wedge\bar{\beta} & 0 \\
        0 & 0 \end{pmatrix} = - \dfrac{\Lambda_0}{3} \begin{pmatrix}
        \beta\wedge \bar{T} - T\wedge\bar{\beta} & 0 \\
        0 & 0 \end{pmatrix} 
              \\
        \\
        \dfrac{k}{\ell \ell'}  \begin{pmatrix}
			0 & 0 & 0 \\
			0 &  \beta\wedge \bar{T} - T\wedge\bar{\beta} & 0 \\
			0 & 0 & 0
		\end{pmatrix} =  
       - \dfrac{\Lambda_0}{3}
            \begin{pmatrix}
			0 & 0 & 0 \\
			0 &  \beta\wedge \bar{T} - T\wedge\bar{\beta} & 0 \\
			0 & 0 & 0
		\end{pmatrix} \end{cases}  \\[2mm]
        &= - \dfrac{\Lambda_0}{3} \dfrac{1}{2} (\beta^a \wedge T^b - \beta^b \wedge T^a) J_{ab} . 
    \end{align*}
This shows that these $[\m,\m]$ contributions to $\h$ can be rendered very small in the limit $\Lambda_0 \to 0$\footnote{Which is a good approximation if one considers $\Lambda_\text{exp} \approx \Lambda_0$.} provided that the gauge fields and their derivatives have compact supports on $\mathcal{M}$.
This leads to a perturbative topological gauge theory in the parameter $\Lambda_0$ corresponding to a bare cosmological constant. This can also be seen as a physical implementation of the mutations. In this type of deformed topological gauge theories, the small observed value of the cosmological constant therefore finds a novel interpretation as a perturbatively topological action, with perturbative expansion in the cosmological constant.

\medskip
For example, taking the Lorentzian geometry of section \ref{section SO 5 ETC} we can build the total action:
    \begin{align}
		\label{topological action mutation 2}
		S_{top} &= v (S_G  + S_M) = v (S_G + \mathrm{Re} S_D) 
	\end{align}
    where we tuned 
    $v$ such that $v,v(\dfrac{2}{\gamma} - \dfrac{4 \Lambda_0 G r}{3\pi})$ tend towards $0$. In this way, we retrieve (if we assume $\mathcal{M}$ has no boundary or when neglecting boundary terms) the equations of motion of section \ref{section SO 5 ETC} with $\Lambda_0 \rightarrow 0$.
    Interestingly, we can make $\Lambda_0 \rightarrow 0^{\pm}$ such that we end up with a dS or AdS spacetime. To some extent, this limit seems to correspond to the interpretation as a "regulator" of the bare cosmological constant in the case of AdS spacetime \cite{Aros:1999id}.
    \begin{remark}
        One is still free to set $v$ such that the coupling constants of the Euler, Pontrjagin and Nieh-Yan terms in the action are respectively $-v\dfrac{3}{64\pi G\Lambda_0}$, $v (\dfrac{r}{16\pi^2} + \dfrac{3}{32\pi G \Lambda_0 \gamma})$, $\dfrac{v\Lambda_0r}{12 \pi^2}$ and have non-vanishing values. While the other terms that perturb topological invariance asymptotically vanish.
    \end{remark} 
 
	\section{Conclusion}

    We have shown that a large class of models for gravitation, including the standard theory of general relativity (modulo the topological Euler density associated to $R$ in the action), can be solely derived from invariant polynomials of the curvature. In particular, LQG with non-trivial Barbero-Immirzi parameter can be obtained by adding the Pontryagin class to the Pfaffian of the Lorentz-valued curvature. 
    In general, only the part related to the Pontryagin class $P(\bOm)$ is invariant under the choice of Cartan connection up to the integral over $\mathcal{M}$ of an exact term according to the Chern-Weil theorem.
    
    \medskip
    
    The following tables summarize the roles played by the different characteristic classes and invariant polynomials involved in the actions for the Lorentzian and Möbius geometries. These geometries were studied through mutations allowing to study symmetric types of Cartan geometries with group $H$. The parameters entering into the choice of mutation maps can be related to a bare cosmological constant $\Lambda_0$.

\medskip\noindent
    Mutated Lorentzian Cartan geometry $\bOmh=R + \dfrac{k}{\ell^2} \beta \wedge \bar{\beta}$ : $\Lambda_0 = - \dfrac{3 k}{\ell^2}$
    \begin{center}
			\begin{tabular}{ |c|c|c| } 
				\hline
				Characteristic class / & Terms & Theory \\
                Invariant polynomial &  & \\
				\hline
				$\Pf(\dfrac{\bOmh}{2\pi})$ & Palatini + $\Lambda_0$ + Euler of R & MM gravity \\ 
				\hline
				\multirow{2}{4em}{$\mathdutchcal{P}(\bOm)$} & \multirow{2}{13em}{Nieh-Yan + Pontryagin of R} & \multirow{2}{16em}{\centering Completion of MM gravity to LQG \\ \centering  with topological BI parameter} \\
				&  & \\ 
				\hline
                $\det (\mathds{1} + \dfrac{\bOmh}{2\pi}) $& $R^{ab}\beta_a \beta_b$ + Pontryagin of R & Completion of Palatini term to Holst \\
                 \hline
			\end{tabular}
	\end{center}

 \medskip\noindent
    Mutated Möbius Cartan geometry with $\alpha = \dfrac{k}{2} \beta$, $\bOmh=F+f$ : $\Lambda_0 = - \dfrac{3 k}{\ell \ell'}$
	\begin{center}
		\begin{tabular}{ |c|c|c| } 
			\hline
			Characteristic class / & Terms & Theory \\
            Invariant Polynomial & & \\
			\hline
			$\Pf(\dfrac{F}{2\pi})$  & Palatini + $\Lambda_0$ + Euler of R & MM gravity \\ 
			\hline
			\multirow{3}{4em}{$\mathdutchcal{P}(\bOm)$} & \multirow{3}{14em}{{\centering ~~Nieh-Yan + Pontrjagin of R} \\ {\centering 
            ~~~ ~~~ ~~~~~~+$ d\lambda \wedge d\lambda $}} & \multirow{3}{16em}{{\centering Completion of MM gravity to LQG} \\ {\centering ~~ with topological BI parameter} \\ {\centering ~
            ~~~ and dilational kinetic term
            }}  \\
			&  & \\ 
			& &   \\
			\hline
            $\det (\mathds{1} + \dfrac{\bOmh}{2\pi}) $& $R^{ab}\beta_a \beta_b$ + Pontryagin of R & Completion of Palatini term to Holst \\
            \hline
		\end{tabular}
	\end{center}

    \medskip
	Two classes of examples have been studied from which Einstein's equations (with positive or negative bare cosmological constant $\Lambda_0$) can be obtained.
	
	\smallskip
	The first example corresponds to the case of a $\g=			
	\left\{
	\begin{array}{ll}
		\mathfrak{so}(4,1) \text{ for } \Lambda_0 > 0 \\
		\mathfrak{so}(3,2) \text{ for } \Lambda_0 < 0
	\end{array}	
	\right.$-valued Cartan connection $\varpi$ defined on the  $H=SO(3,1)$-principal bundle $\mathcal{P}$. The computed action gives the Holst action together with the Euler, Pontrjagin, Nieh-Yan topological terms as well as a bare cosmological constant term, see \eqref{topological action mutation}. The coupling constants of these topological terms are entirely determined by the four parameters entering the Holst action ($G$, $\gamma$), the Nieh-Yan term ($r$) and the bare cosmological constant $\Lambda_0$.
	
	\smallskip

	The second example deals with the Möbius geometry. 
    In this case restricting the geometry to a $4$-dimensional submanifold $\mathcal{M}$ yields the same results as in the previous class of examples if one drops out dilations. Adding to the original action \eqref{eq:action-Moebius-bis} an interaction $T^a \wedge \lambda \wedge \beta_a$ between dilations and torsion instead leads to Einstein's equations (modified by the Holst term) with an additional source term (due to the dilational part of the algebra) for the curvature depending on a particular variation of spin density $-\dfrac{\pi G}{3} \partial_{\mu} \mathfrak{s}^{ab}_{~~ab\nu} \varepsilon^{\mu\nu}_{~~\;kc}$ and on torsion (if one does not assume $T=0$) when coupled to matter.
	
	\smallskip
	In these two classes of examples treated in the paper, the Pontrjagin number $\mathdutchcal{P}(\bOm)$ brings about a topological contribution to the Barbero-Immirzi parameter via the Nieh-Yan density, it also adds the other parity violating term (see \cite{date_topological_2009,Freidel:2005ak}) corresponding to the Pontrjagin density of the spacetime curvature $R$. 

    We also showed the construction of the action motivated by topological arguments leads to constrained couplings of the Euler/Gauss-Bonnet and Pontrjagin densities now expressed in terms of the previous four parameters corresponding to $G$ (gravitational coupling), $\gamma$ (BI parameter), $r$ (Nieh-Yan coupling) and $\Lambda_0$ (bare cosmological constant). The values of the constrained couplings are consistant with the specific values allowing for finite Noether charges as described in \cite{Aros:1999id,Miskovic:2009bm}.

    In the future we think studying the actions described in this manuscript in terms of BV equivalence \citep{cattaneo2024gravity} could also give very interesting results. The formulation of the deformed topological action \eqref{eq:action_start} in terms of deformations of Chern characteristic classes also motivates us to investigate it as a possible generalization of the Kodama state in gravitational theories.
    
    Finally, the last section addressed the issue of the topological character of the actions. It has been shown  that one can recover an action asymptotically invariant under the choice of gauge fields with compact supports and perturbative expansion in the bare cosmological constant $\Lambda_0$.

   In light of some recent developments in cosmology \cite{tang2024uniting,abdul2025desi} regarding the Hubble tension, the next step (explored in \cite{Thibaut:2025lwx}) shall be to consider spacetime dependent mutations, thus leading to a dynamical cosmological "constant".
 
	\subsubsection*{Acknowledgements}
	
	We thank our colleagues Thierry Masson, Thomas Schücker, Pietro Dona and Federico Piazza for helpful discussions.
    One of us (J.T) would also like to thank Thomas Ströbl for insightful comments.

\end{document}